\documentclass[11pt,letterpaper,USenglish]
{article}
\usepackage{amsthm,color,latexsym,graphicx}
\usepackage[hyphens]{url}
\usepackage[margin=1in]{geometry}
\usepackage[font=small,labelfont=bf]{caption}
\usepackage[labelformat=simple]{subcaption}

\DeclareSymbolFont{matha}{U}{matha}{m}{n}

\DeclareMathSymbol{\Lt}{3}{matha}{"CE}
\DeclareMathSymbol{\Gt}{3}{matha}{"CF}

\usepackage{float}

\usepackage{indentfirst,csquotes}

\usepackage{paralist,etoolbox}
\usepackage{
		graphicx,		
		amsmath,		
		amssymb,		
		amsthm,			
		tikz,			
		multicol,		
		color,			
		thmtools,		
		etoolbox,		
		xspace,			
        hyperref,		
        xcolor,         
        datetime,       
        csquotes,        
        svg,
        placeins
        }
        
\usepackage{hyperref}
\hypersetup{breaklinks,
bookmarksnumbered,bookmarksopen,bookmarksopenlevel=2}
{\makeatletter \hypersetup{pdftitle={\@title}}}

\newtheorem{theorem}{Theorem}[]
\newtheorem*{theorem-no-number}{Theorem}

\newtheorem{lemma}[theorem]{Lemma}
\newtheorem{proposition}[theorem]{Proposition}
\newtheorem{corollary}[theorem]{Corollary}

\usepackage{float}

\usepackage{indentfirst,csquotes}

\usepackage{paralist,etoolbox}

\usepackage{comment}
\usepackage{doi}

\usepackage{afterpage}
\usepackage{placeins}
\usepackage{amsfonts}
\usepackage{graphicx}
\usepackage{hyperref}
\usepackage{color}

\DeclareFontFamily{U}{matha}{\hyphenchar\font45}
\DeclareFontShape{U}{matha}{m}{n}{
      <5> <6> <7> <8> <9> <10> gen * matha
      <10.95> matha10 <12> <14.4> <17.28> <20.74> <24.88> matha12
      }{}
\DeclareSymbolFont{matha}{U}{matha}{m}{n}

\DeclareMathSymbol{\Lt}{3}{matha}{"CE}
\DeclareMathSymbol{\Gt}{3}{matha}{"CF}

\usepackage{subcaption}

\newcommand{\gpos}{$G_{pos}\textrm{(POS CNF)}$}

\begin{document}
\title{Quoridor is PSPACE-Complete}
%
%

\author{%
  Marius Drop
\and
  Benjamin G. Rin%
    \thanks{Corresponding author. Utrecht University, Utrecht, The Netherlands.
      \protect\url{b.g.rin@uu.nl}}
\and
  Finn van der Velde
    }
    
\date{March 15, 2026}


%
\maketitle              

\begin{abstract}
Quoridor is an award-winning abstract strategy game designed by Mirko Marchesi and published in 1997. Similar games include Maze Attack, Blockade (also known as \textit{Cul-de-sac}), and Pinko Pallino.   
In line with chess, checkers, Go, and other classic combinatorial games, Quoridor is a turn-based, deterministic, perfect-information game  
played on a  
square grid.  
We show that  
it is PSPACE-complete to determine whether a given player has a winning strategy in a given Quoridor position on a board with size~$n\times n$. We prove this  
by reduction from \gpos, a  
Boolean formula game  
originally
defined in 1978 by T.~Schaefer.


\end{abstract}


\section{Introduction}

Quoridor is a strategy game designed in 1997 by Mirko Marchesi. It is closely related to predecessors Blockade (also known as \textit{Cul-de-sac}), designed in 1975 by Philip Slater, and Pinko Pallino, designed in 1995 also by Mirko Marchesi. A more recent game, Maze Attack, has similar rules. The focus of the present paper is on Quoridor. Differences among these games and their effects on this paper's results are discussed in Section~\ref{sec:disc}.

In Quoridor, players each have a \emph{pawn} that is able to move up to one square per turn. They also have \textit{walls} that can be placed between spaces to restrict the pawns' movement. The goal is to be the first player whose pawn reaches the opposite side of the board. In this respect, the game resembles some older games such as Hex, which has been known since~\cite{reisch1981hex} to be PSPACE-complete. The aim of the present work is to determine the computational complexity of Quoridor. Our main finding is that the problem of determining whether a given player has a winning strategy in a given position in (an appropriately generalized version of) Quoridor is likewise PSPACE-complete. This resolves a problem that has been open for nearly two decades~\cite{Demaine-Hearn2009-paper}.
We achieve this by reduction from T.~Schaefer's formula game~\gpos. Similar types of results, using a wide variety of proof strategies, have been found 
for
games such as chess~\cite{FraenkelLichtenstein, storer}, checkers~\cite{RobsonCheckers},  
Go~\cite{gopspacehard, RobsonGO},   
Amazons~\cite{Buro-Amazons, 
amazonsKonaneCrosspurposesPSPACEcomplete}, Othello/Reversi~\cite{iwata1994othello}, Phutball~\cite{Phutball}, Havannah~\cite{BonnetJS14}, Arimaa~\cite{Rin-Schipper24}, Gomoku~\cite{Gobang}, and many others.

We begin with a description of the game rules (Section~\ref{sec:rules}) and an overview of the results (Section~\ref{sec:results}). We then lay out a few preliminaries (Section~\ref{sec:prelim}), followed by the reduction proof itself (Section~\ref{sec:proof}). We conclude with a short discussion (Section~\ref{sec:disc}).

\subsection{Game rules} \label{sec:rules}
On a $9 \times 9$ square grid, opposite-facing players White and Black each begin with a pawn  
in the center of their back rank. Each player also begins with ten \emph{wall} pieces in their supply. In a given turn, a player either moves their pawn or places a wall piece on the board. Pawns can move one square cardinally, or jump over an opposing pawn if cardinally adjacent (see Fig.~\ref{jumpingstraight}). Additional rules about jumping are discussed below. A player wins the game when their pawn is the first to reach the opposite edge of the grid.

Wall pieces, which have length~2 and width~0, are placed on the borders between squares. A pawn cannot pass through a wall.
Once placed, a wall can never be removed or relocated. Walls cannot be placed to overlap, nor can they intersect to form a plus.  Importantly, it is illegal to place a wall such that 
it leaves 
a player
with no possible path to victory. Thus, the game is not about permanently trapping one's opponent, but slowing them down long enough to win. 

When jumping over an adjacent opposing pawn with an available space behind it, 
the jump occurs as in Fig.~\ref{jumpingstraight}.
If the space behind it is instead blocked by a wall or the edge of the board, 
the jumping pawn must land to the side of the opponent’s pawn (see Fig.~\ref{jumpingtwice}). Another wall placed to the side of the opponent's pawn may block this  
(Fig.~\ref{jumpingonce}).

\begin{figure}[tbp!]
  \centering
  \begin{minipage}[b]{0.3\textwidth}
    \includegraphics[width=\textwidth]{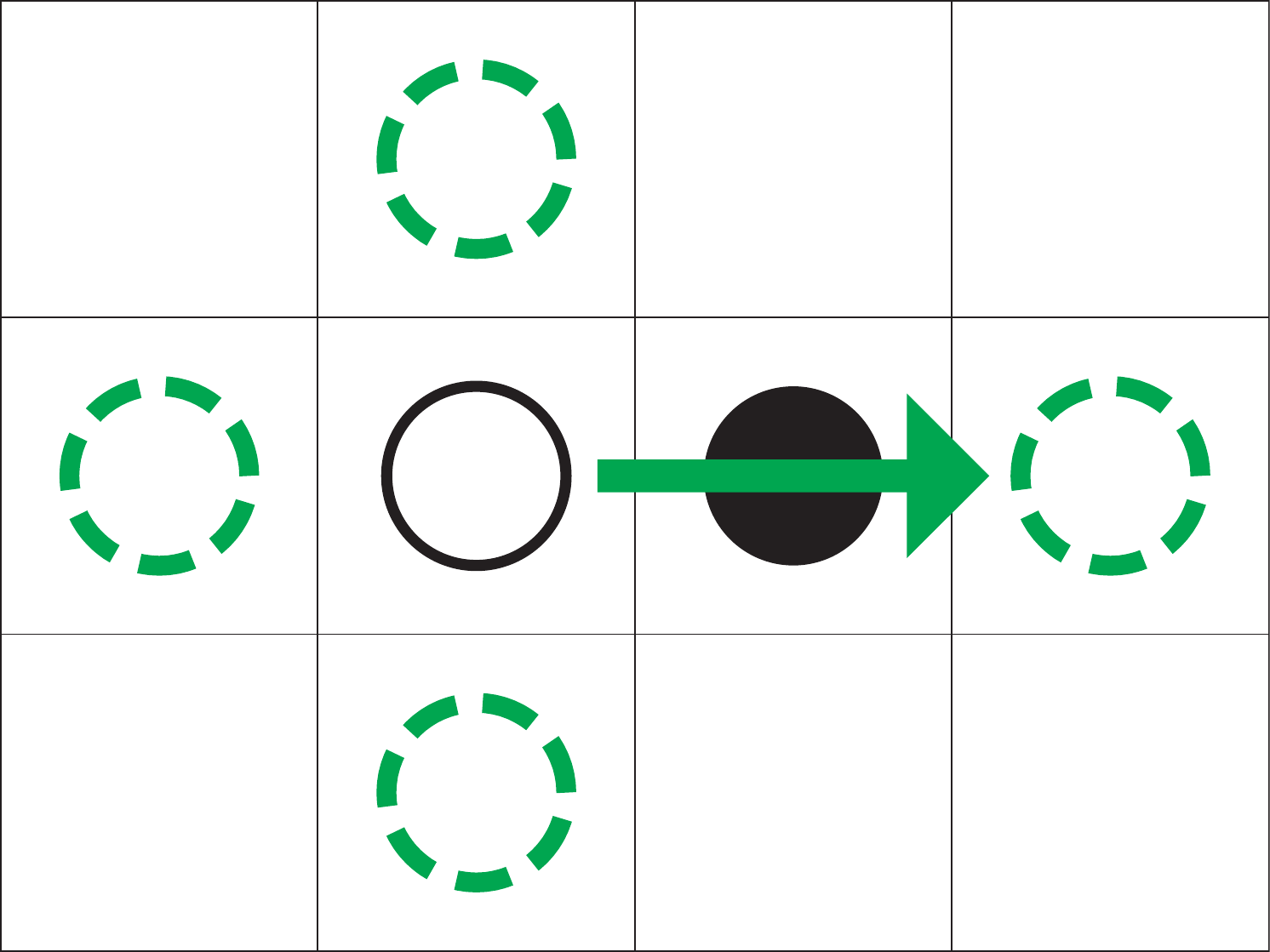}
    \caption{Jumping to the open square behind an opponent's pawn.}
    \label{jumpingstraight}
  \end{minipage}
  \hfill
  \begin{minipage}[b]{0.3\textwidth}
    \includegraphics[width=\textwidth]{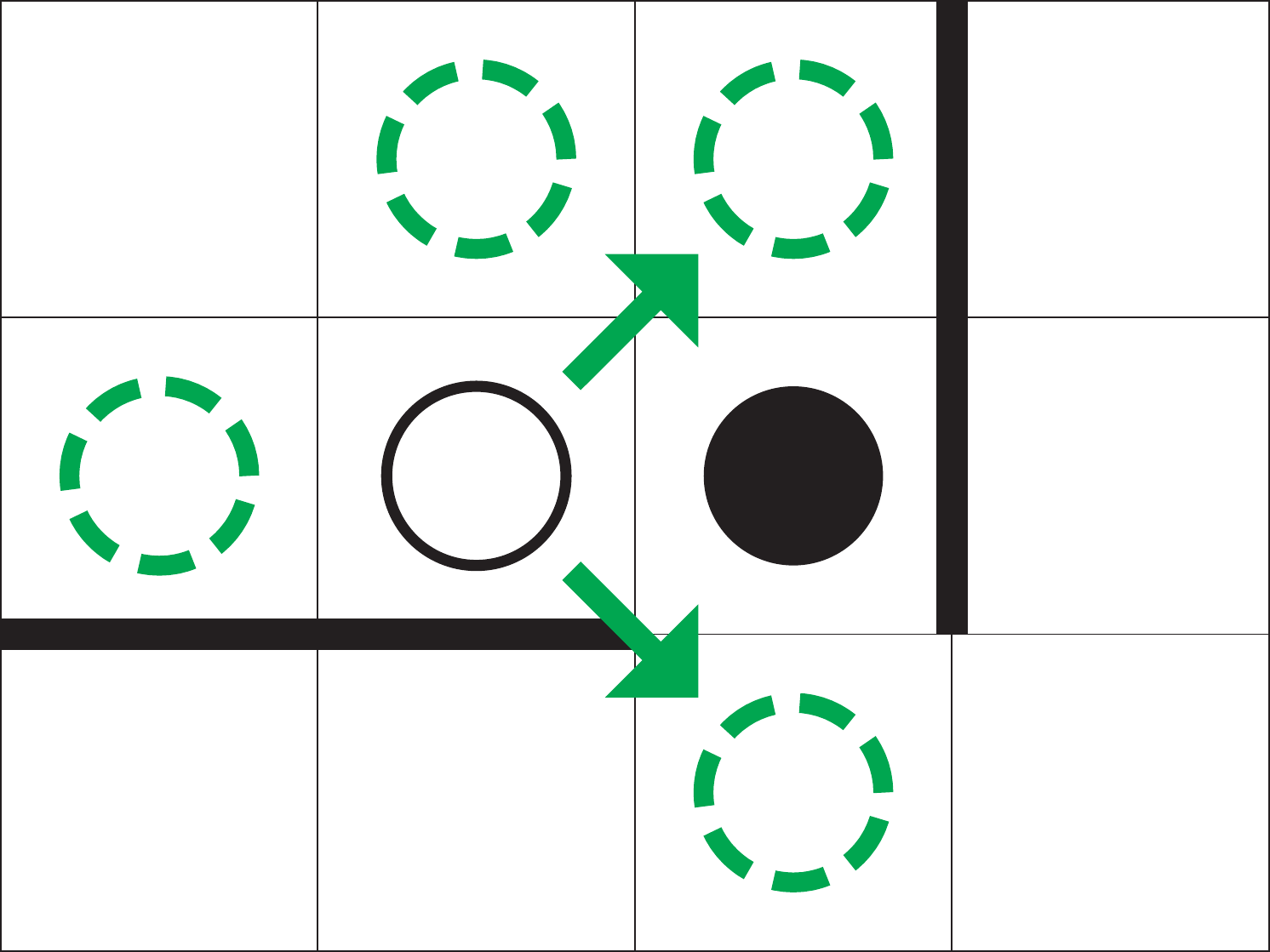}
    \caption{Jumping with a wall behind an opponent's pawn.\\ \ }
    \label{jumpingtwice}
  \end{minipage}
  \hfill
  \begin{minipage}[b]{0.3\textwidth}
    \includegraphics[width=\textwidth]{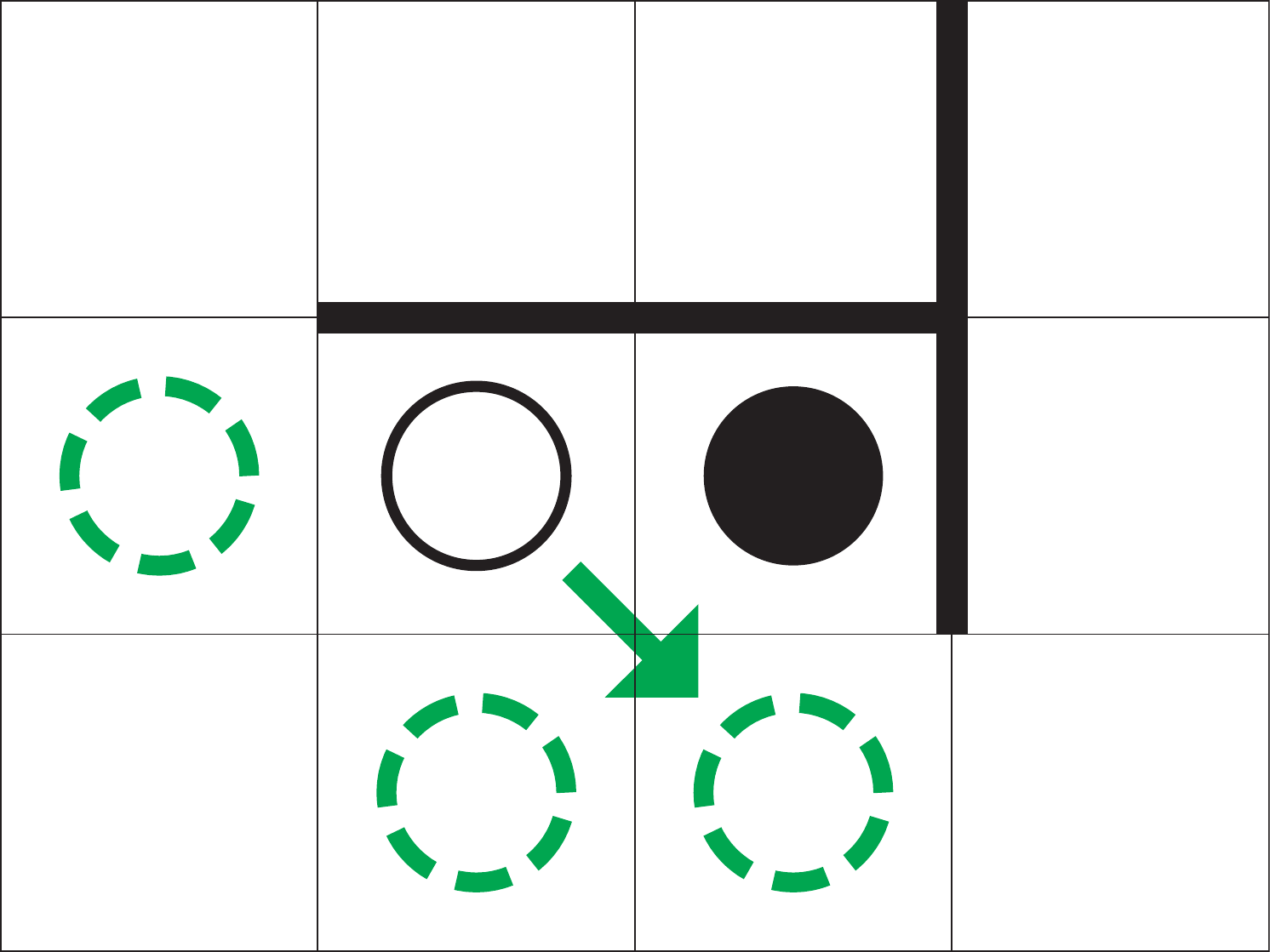}
    \caption{Jumping with walls behind and next to an opponent's pawn.}
    \label{jumpingonce}
  \end{minipage}
\end{figure}

Note that Quoridor has no explicit rule limiting the length of games in the face of endlessly repeated positions, such as the three-fold repetition rule in chess.

As a strategic consequence of the rule requiring that each player always have a path to victory, there are times when it is rational for a player to place a wall blocking one of their \textit{own}
paths to victory. This follows from the fact that a player with multiple victory paths who carelessly moves their pawn along one of them may have an opponent block it off just before the pawn reaches the end, forcing the player to waste time retracing their steps and choose another path. 
By blocking all but one of one's own victory paths, a player can ensure
that 
their final path is 
legally unblockable before committing to a given direction. This scenario arises for Black in the proof of our main theorem. Indeed, his\footnote{For ease of exposition, we have elected to use
`he' for Black and `she' for White throughout this paper.}  strategy hinges on it.

\subsection{Results}\label{sec:results}

This paper considers a generalization of Quoridor to an $n \times n$ grid, with a reasonable number  
of wall pieces per player  
(say, scaling   
linearly with the 
grid size). For example, we may  
stipulate that they each have $\frac{10n^2}{81}$ wall pieces,  
resulting in ten wall pieces per player for every 81 squares, as in the original game.
Players still have one pawn each. The \textsc{Quoridor} decision problem 
is then defined as
the problem of
determining whether White 
has a  
winning strategy in an arbitrary  
position of this generalized version of the game.\footnote{One may also wish to consider the complexity of the game as played from its starting position. This question is not addressed in the present paper.} Our main result is  
the following.

\begin{theorem-no-number}
The \textsc{Quoridor} decision problem is PSPACE-complete.\footnote{The proof presented in this paper originates from work done for the bachelor thesis of the alphabetically third listed author, written under supervision of the second.~\cite{FinnThesis}}
\end{theorem-no-number}

Showing
\textsc{Quoridor} $\in$ PSPACE  is straightforward (see Theorem~\ref{thm:inPSPACE}).
To show hardness, we give a reduction from \gpos, a PSPACE-complete Boolean formula game introduced in 1978 by Schaefer:
 \begin{quote}
Input is a positive CNF formula (that is, a CNF formula~[denoted $A$] in which $\lnot$ does not occur). A move consists of choosing some variable of~$A$ which has not yet been chosen. The game ends after all variables of~$A$ have been chosen. Player~I wins if and only if~$A$ is true when all variables chosen by~I are set to~1 and all variables chosen by~II are set to~0. (In other words, [player]~I wins if and only if [player I] succeeds in playing some variable in each conjunct.) For example, on input $x \land (y \lor z) \land (y \lor w)$ player~II can win. \cite{Schaefer} \end{quote}
The problem of \gpos\ has seen attention in the study of Maker-Breaker games, some recent findings on which can be found in, e.g.,~\cite{4-Uniform,Koepke2025,6-Uniform}.
In proofs of PSPACE-hardness, \gpos\ and related problems have found some moderate use for various reductions (cf.~\cite{Tumbleweed,DotsAndBoxes, miserepartizanarckayles,Duchene-et-al-2025,Fenner-et-al-2015, amazonsKonaneCrosspurposesPSPACEcomplete}. What distinguishes \gpos\ from problems such as TQBF is that players do not choose which truth-value to assign a variable on their turns, but they do choose which variable to select. 

Rather than having the game end immediately when all variables are 
chosen, the present paper adopts an equivalent formulation of \gpos\ wherein Player~II chooses any one clause, then Player~I chooses a literal within that clause, after which Player~I wins if and only if the chosen literal is {\tt true}. 

While Quoridor can normally be played with up to four players, our reduction requires only two. 
If desired, the proofs of our results 
can be easily adapted to accommodate three or four players (see Sec.~\ref{sec:disc})

\begin{theorem}\label{thm:inPSPACE}
    {\sc Quoridor} $\in$ PSPACE.
\end{theorem}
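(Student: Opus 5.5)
The plan is to exploit the one monotone feature of Quoridor: walls are never removed. Because Quoridor has no repetition rule, plays can be infinite, so we cannot simply search a game tree of polynomial depth. A play still splits into at most $W+1$ \emph{epochs}, separated by wall placements, where $W$ is the total number of walls that can ever be placed. $W$ is at most the number of wall slots on the $n\times n$ board, which is $O(n^2)$, whatever supplies the input lists. Within an epoch the wall configuration is fixed and only the pawns move.

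For a fixed wall configuration and fixed supplies, the \emph{pawn-only states} are the pairs of pawn squares together with the player to move. There are at most $2n^4$ of them. From such a state the mover has two kinds of moves:
\begin{itemize}
\item a pawn move, including jumps, which stays inside the epoch;
\item a legal wall placement, which \emph{exits} to a position with one more wall on the board.
\end{itemize}
Checking that a wall placement is legal (no overlap, no crossing, and both players keep a path to their goal rows) takes polynomial time by breadth-first search. The number of exits from a single state is $O(n^2)$.

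We define a recursive procedure $\mathrm{Win}(P)$, which decides whether White has a winning strategy from position $P$, by recursion on the number of walls still placeable. Infinite plays count as non-wins for White, matching the decision problem. Given $P$, the procedure does the following.
\begin{enumerate}
\item Build a table indexed by the $O(n^4)$ pawn-only states of the current epoch.
\item For every state and every legal wall placement from it, call $\mathrm{Win}$ recursively on the resulting position. Record the Boolean answer in a table of size $O(n^6)$, reusing the recursive call's space afterwards.
\item Compute White's attractor to her winning condition by the standard least-fixed-point iteration. The base set is the states where White's pawn is on her goal row and Black's is not. A state with White to move joins if some pawn move leads into the current set or some wall exit is marked winning. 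A state with Black to move joins if Black's pawn is not already on his goal row and all of his pawn moves and wall exits lead to White-winning positions.
\item Iterate at most $2n^4$ times and report whether the state of $P$ lies in the resulting set.
\end{enumerate}
This is correct because a play that stays in one epoch forever is a non-win for White, exactly as in a reachability game.

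For the space bound, each recursion level stores only polynomially many bits: the current wall configuration, the supplies, and the tables above. The recursion depth is at most $W+1=O(n^2)$, so the total space is polynomial in the input size and {\sc Quoridor} $\in$ PSPACE.

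The main obstacle is the unbounded length of plays, which rules out the usual argument that a game of polynomial length lies in PSPACE (or APTIME). The epoch decomposition is the step I expect to need the most care. I would check that the rule set really makes wall placement the only irreversible event, and that the finite fixed-point iteration inside an epoch correctly separates White wins from draws by infinite play.
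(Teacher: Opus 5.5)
Your proof is correct, but it takes a different route from the paper's.

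\textbf{The paper's route.} The paper does not solve epochs explicitly. Instead it bounds the height of a pruned game tree. There are at most $n^4$ pawn placements and at most $n^2$ wall placements. If every White move that recreates an earlier position is discarded, each epoch lasts $O(n^4)$ plies and every branch has length roughly $n^6$. The paper then invokes the generic fact that polynomial-height game trees can be evaluated in polynomial space (Schaefer's Lemma~2.2). For repetition checks it stores only the $\le n^4$ positions of the current epoch.

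The cost of this route is that correctness rests on the claim that pruning repeating White moves never destroys a forced win, and the paper states this only informally. The clean justification is exactly the attractor fact underlying your argument. White's winning region in a reachability game admits a positional strategy along which the attractor rank strictly decreases. Such a strategy never revisits a position, so it survives the pruning.

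\textbf{Your route.} Your version avoids that lemma entirely. Infinite plays are handled directly by the semantics of the least fixed point, so you never need to bound the length of a play. The per-epoch computation is an explicit retrograde analysis over all $O(n^4)$ pawn states, with wall exits valued recursively. Running the dual attractor for Black would also give the exact three-valued outcome (White win, Black win, draw). That outcome is precisely the distinction the paper cares about when it notes that ``no'' instances include drawn positions.

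\textbf{Trade-offs.} The paper's version is shorter and reduces to a standard lemma. It keeps only the current line of play rather than tables over all pawn states at each of the $O(n^2)$ recursion levels. It also fits directly into the alternating-polynomial-time view. Both arguments give polynomial space.
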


\begin{proof}
  Observe that there are~$n^2 \times n^2 = n^4$ unique pairs of White and Black pawn locations, putting a polynomial bound on the number of consecutive pawn moves possible between wall-placing turns without repeating a position. The total number of wall pieces available is bounded above by~$n^2$, so without repeating positions, a game's length is bounded by~$n^6$.

     An algorithm can construct a game tree that prunes any branch with a White move that repeats an earlier position, since such a move can never be the best White move unless all options fail to force a win (in which case, pruning this branch will anyway not lead to a wrong answer, as the algorithm will correctly answer ``no'' after evaluating the other branches). Because every wall-placing turn guarantees that future positions are distinct from positions before the wall was placed, no more than~$n^4$ positions need to be stored in memory to check whether a potential White move creates a repetition. So, since the height of the pruned decision tree is polynomial, the decision problem is in PSPACE (see, e.g.,~\cite[Lem. 2.2]{Schaefer}).
\end{proof}

\begin{theorem}\label{thm:PSPACE-hard}
    {\sc Quoridor} is PSPACE-hard.
\end{theorem}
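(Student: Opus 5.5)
We reduce from \gpos\ in the equivalent formulation fixed above. Player~I and Player~II alternately pick variables; then Player~II names a clause and Player~I names a literal in it. Given a positive CNF formula $A$ with variables $x_1,\dots,x_m$ and clauses $C_1,\dots,C_k$, we will build in polynomial time a Quoridor position on an $n\times n$ board, with $n$ polynomial in $m+k$, in which White (playing Player~I) has a winning strategy iff Player~I wins \gpos\ on $A$. Black plays Player~II.

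\textbf{Race and tempo.} The position is a race in which both pawns are far from their goals, with distances tuned so that whoever loses a single tempo loses the race. Long corridors, walled off so that their routes are fixed, carry both pawns. Each player's supply is exactly the number of walls their intended strategy needs, so a wall spent elsewhere costs a tempo the player cannot afford.

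\textbf{Variable phase.} For each variable $x_i$ there is a \emph{variable gadget}: a narrow gap that exactly one more wall can close, placed where closing it changes which of White's later paths stay short. Say White's corridor to a clause region passes through gaps labelled by the variables. If White closes gap~$i$, this commits a side channel that gives her a short path through the literal $x_i$. If Black closes it, the short path is cut. A wall by either player at gap~$i$ thus plays the role of selecting $x_i$.

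\textbf{Clause and literal phase.} After the variable phase, Black's pawn walks to a fork with one branch per clause. Here Black uses the self-blocking trick from Section~\ref{sec:rules}: he walls off all but one clause branch of his own paths before committing, so White cannot block his final route. The branch he keeps forces White's pawn into the corresponding clause region. There White chooses one literal corridor. It is short (winning by one tempo) iff that variable was selected by White, and long otherwise.

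Correctness has two directions. First, a winning \gpos\ strategy for Player~I translates move by move into a White strategy. Second, any deviation loses a tempo and hence the race. Typical deviations are moving a pawn early, placing walls off-gadget, double-blocking, or exploiting jumps when the pawns meet. The no-trapping rule is used in our favour: gadgets are designed so that closing a gap never removes a player's last path, while walls that would trap are simply illegal.

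The main obstacle is this soundness analysis. We must show that in the constructed position no move outside the intended protocol gains a tempo. Concretely:
\begin{itemize}
\item walls placed inside corridors must not lengthen the opponent's path by more than they cost;
\item pawn moves made during the variable phase must be harmless or losing;
\item jumps must be impossible, which we ensure by keeping the pawns' corridors disjoint.
\end{itemize}
We would handle this by proving a lemma that each wall placement increases the opponent's shortest-path length by at most one, except at designated gadget gaps. Combined with a precise count of distances in each branch, the lemma reduces the game to the \gpos\ protocol. Since \gpos\ is PSPACE-complete~\cite{Schaefer}, this yields Theorem~\ref{thm:PSPACE-hard}.
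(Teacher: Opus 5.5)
Your proposal has genuine gaps, starting with the variable gadget, which cannot work as described. In Quoridor a wall has no owner. A wall closing gap~$i$ has the same effect whoever places it, so after your variable phase the walls on the board would be identical no matter which player selected which variable. The literal phase therefore could not depend on the assignment.

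Moreover, a wall never opens a path; the only way it can protect one is by making later walls illegal, via the overlap, no-plus, or last-path rules. The paper relies on exactly this. The truth value is encoded in the \emph{orientation} of the single wall that fits in a $2\times 2$ box. White's vertical wall leaves the exit open and, since walls cannot form a plus, permanently unblockable; Black's horizontal wall closes it. Some such protection is indispensable. Otherwise, since other literal corridors remain open, Black can legally wall off White's short literal corridor the moment she commits to it, forcing a long backtrack.

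The clause phase has a similar hole. Walling off Black's \emph{own} branches does nothing to White's pawn, and with the pawns kept in disjoint corridors nothing in your design couples Black's choice to White's location. The paper instead has Black wall off White's entrances to all clauses but one. Each clause entrance chamber has $\lceil v/2\rceil+2$ corridors, too many for White's walls to close, and a timing argument shows Black keeps pace as she tries them. The paper then confines White to the chosen clause and variable gadget by railroading: Black's pawn, in the same maze, reaches each cross junction first and sits in it. The side-path junctions are built so that this cannot stop White from entering a variable of that clause. All of this needs the pawns to interact, which you explicitly rule out. Black's large wall supply is what lets him stall indefinitely and close $b-1$ of his own victory corridors.

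Finally, your proposed soundness lemma (an off-gadget wall costs the opponent at most one step) is the wrong invariant. In Quoridor the decisive threat is a wall dropped in front of a pawn committed to one of several corridors, which costs a backtrack proportional to the corridor's length. A race decided by a single tempo cannot survive that unless every relevant path is unique or protected. This is why the paper uses large margins ($24m$ versus $16m$) and a stalling mechanism rather than exact tempo counting.
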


Our proof will give a reduction that maps any instance of \gpos\ to a position in $(n\times n)$ Quoridor 
preserving
the existence or lack of a winning strategy for Player~I (represented by White). Note, however, that negative answers to the {\sc Quoridor} decision problem include both positions in which Black has a winning strategy and positions in which neither player has a winning strategy (where the best option for both sides is to repeat moves). We therefore have elected to add an extra component to our reduction---the \textit{black victory corridors}---to remove the latter possibility, at the cost of mildly increasing the reduction's complexity. This enables us to establish the following (slightly stronger) result.

\begin{corollary}\label{cor:main}
    {\sc Quoridor} is PSPACE-complete even when restricted to positions in which 
    either White or Black has a forced winning strategy.
\end{corollary}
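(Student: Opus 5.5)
Membership in PSPACE for the restricted problem is immediate from Theorem~\ref{thm:inPSPACE}: restricting the set of instances cannot increase complexity. The restricted problem is a promise problem, and membership only needs to hold on instances satisfying the promise. So the work is in hardness. The plan is to reuse the reduction behind Theorem~\ref{thm:PSPACE-hard}, which maps an instance $A$ of \gpos\ to a Quoridor position $P_A$, and to show that every position it outputs already satisfies the promise. That is, in $P_A$ either White or Black has a forced win, and White has one if and only if Player~I wins $A$. This gives a polynomial-time reduction from \gpos\ to the restricted problem, so the Corollary follows.

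\textbf{Step 1.} Use the equivalent formulation of \gpos\ stated earlier. The variable-selection phase is simulated by White and Black alternately committing wall placements in variable gadgets. Then Black selects a clause corridor, White selects a literal within it, and the race is decided by whether that literal's gadget was claimed by White. From the hardness proof I take as given the forward direction: if Player~I has a winning strategy in $A$, White can mirror it in $P_A$ and reach her goal row first, so White has a forced win.

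\textbf{Step 2.} Show the converse in the strong form: if Player~II has a winning strategy in $A$, then \emph{Black} has a forced win, not merely a drawing strategy. Here the black victory corridors come in. They are long corridors leading Black to his goal edge. Their length is tuned so that Black reaching the goal through them is slower than any successful White run, yet strictly faster than any White path that has been blocked or forced to retrace.

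Black's strategy has three parts.
\begin{compactenum}
\item Follow Player~II's winning strategy during the selection phase.
\item Seal off all but one of his own victory routes, as described in Section~\ref{sec:rules}, so that his remaining route cannot legally be blocked.
\item Having chosen a clause that White cannot satisfy, walk down the unblockable route.
\end{compactenum}

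\textbf{Step 3.} Handle deviations. If White departs from the intended protocol by stalling, repeating pawn moves, or placing walls outside the gadgets, I need to show that each such move either costs her tempo (Black's corridor timer keeps running) or is answered by a Black move that preserves his advantage. Wall-count bookkeeping matters here: Black must have enough walls to complete his sealing, and White must lack enough walls to lengthen Black's final corridor, either because placements there are illegal, since they would cut his only path, or because she has run out. Symmetrically, Black's deviations are already covered by the forward direction.

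I expect Step 3 to be the main obstacle: arguing that White has no cycling or stalling strategy that indefinitely prevents Black from winning. I would first try a potential-function argument. Measure Black's shortest-path distance to his goal along the sealed corridor against White's best remaining distance. Then show that every White move either leaves Black's distance unchanged, because walls cannot legally cut his last path, or wastes a tempo. This should let Black's distance reach zero strictly before White's does.
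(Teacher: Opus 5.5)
Your overall frame is the paper's. Membership comes from Theorem~\ref{thm:inPSPACE}, and hardness holds because the same reduction already outputs positions with a forced winner, with the black victory corridors upgrading Player~II's win from a draw to a Black win.

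The gap is in Step~2: it never explains how White is kept from winning while Black spends the $13m-1$ turns needed to seal his excess corridors. In the false case White is not merely delayed by a bounded amount that a well-tuned corridor can outrun. She is confined indefinitely, and that confinement is not structural. After Black picks the clause, White can still turn at any cross junction, leave a variable gadget at a variable gadget entrance, or head for the emergency exit. Only Black's pawn railroading her at the right junction stops this (Lemmas~\ref{lem:stage3} and~\ref{lem:stage4}). So Black's pawn must stay in the maze for the entire sealing phase, and may leave only once a single corridor remains.

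If he seals without guarding, or walks out early as your ordering of the three steps suggests, White escapes and wins before turn $29m$ (Lemma~\ref{lem:5m}), while Black needs more than $(13m-1)+16m$ turns. Running with several corridors still open fails as well, since White's reserved wall blocks whichever one he enters. Nor is it enough to import from the hardness proof that Black can confine White indefinitely (Corollary~\ref{cor:sim-escape}). A confinement that absorbs every Black move yields exactly the draw the corollary must exclude.

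What is missing is the ``time to spare'' clause of Lemmas~\ref{lem:stage3}--\ref{lem:stage4}. Whenever White heads for a new junction, she must traverse a winding road ($L$ extra steps, or $3L$ in a variable gadget). Black's detour through the shortcut corridors is shorter, so he arrives first with a free move. Whenever White stalls or oscillates, Black stays on the junction and places a wall instead, which his supply of $15m$ walls lets him do as long as needed. Hence the number of open black corridors strictly decreases until one remains. Only then does Black leave the maze and win with his now legally unblockable $16m$ corridor against White's $24m$ (Lemma~\ref{lem:escape-Quor}).

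Your Step~3 potential cannot supply this. It presupposes that the sealing is done and governs only the final race, which is the easy part given the $8m$ margin. During sealing, Black's distance to his goal does not shrink, and White's unobstructed distance stays finite, since the emergency exit always leaves her an open path. What keeps her in is the invariant that Black reaches every junction first. The quantity that must be shown to decrease is the number of open black corridors.
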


\section{Preliminary constructions and definitions}\label{sec:prelim}

\subsection{Corridors}

A \textbf{\textit{corridor}} consists of two parallel chains of walls separated by distance~1. Additionally, walls perpendicular to each of these walls are placed outside the corridor to ensure that no further walls can legally be added into the corridor to block it (recall that walls cannot overlap). Some examples of such perpendicular walls include the two uppermost vertical walls in Figs.~\ref{fig:railroading-T1}-\ref{fig:railroading-T2} and the two rightmost horizontal walls in Figs.~\ref{truth-value-assignment1}-\ref{truth-value-assignment3}. 

\subsection{Railroading}
While not typical in real $9\times 9$ Quoridor games, a situation frequently arises in our proof wherein Black's pawn stands in a junction of two corridors, either a T-junction or cross (4-way) junction, legally preventing White from making a $90^\circ$ turn through it for as long as Black's pawn stays there. We refer to this behavior as \textbf{\textit{railroading}}. See Figs.~\ref{fig:railroading-T1}-\ref{fig:railroading-cross}. In the case of a T-junction, railroading prevents 
making $90^\circ$~turns into the stem of the T,  
but not
turns originating from the stem. At a cross junction, railroading prevents all $90^\circ$ turns.

\begin{figure}[h!]
  \centering
  \begin{minipage}[b]{0.3\textwidth}
  \begin{center}
    \includegraphics[width=\textwidth]{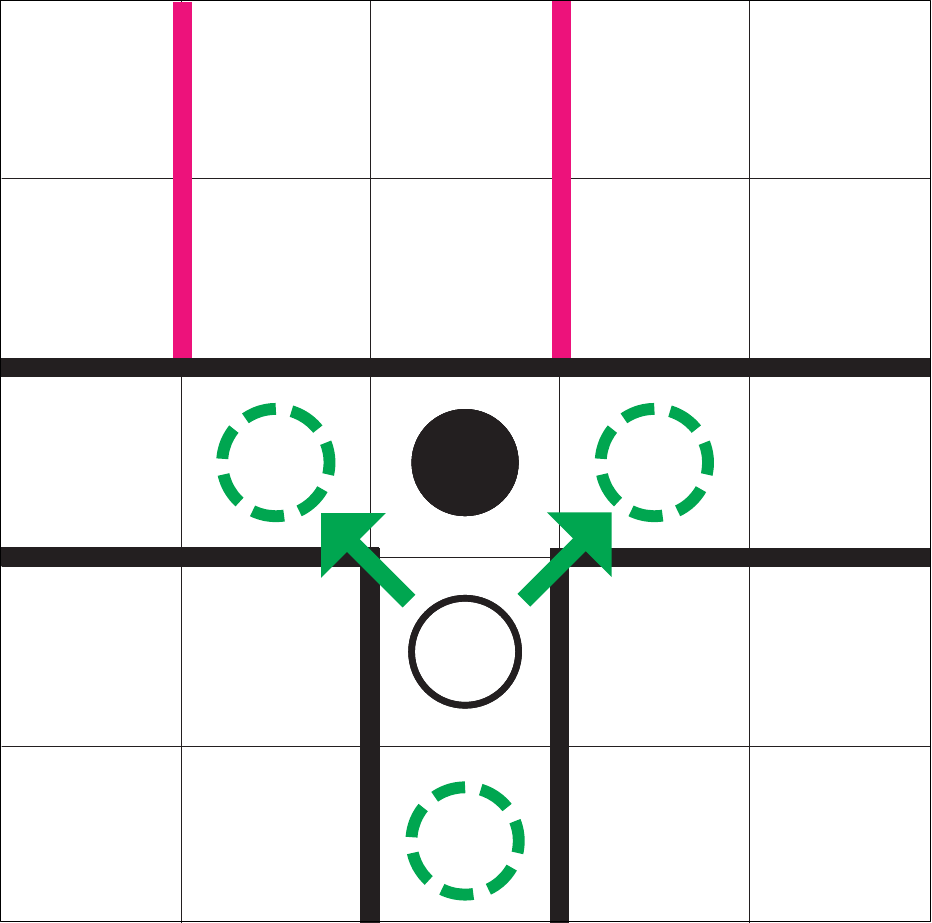}
    \caption{White's options when moving at a railroaded T-junction from the stem. Turning $90^\circ$ is possible.}
    \label{fig:railroading-T1}
   \end{center}
  \end{minipage}
  \hfill
  \begin{minipage}[b]{0.3\textwidth}
    \begin{center}
    \includegraphics[width=\textwidth]{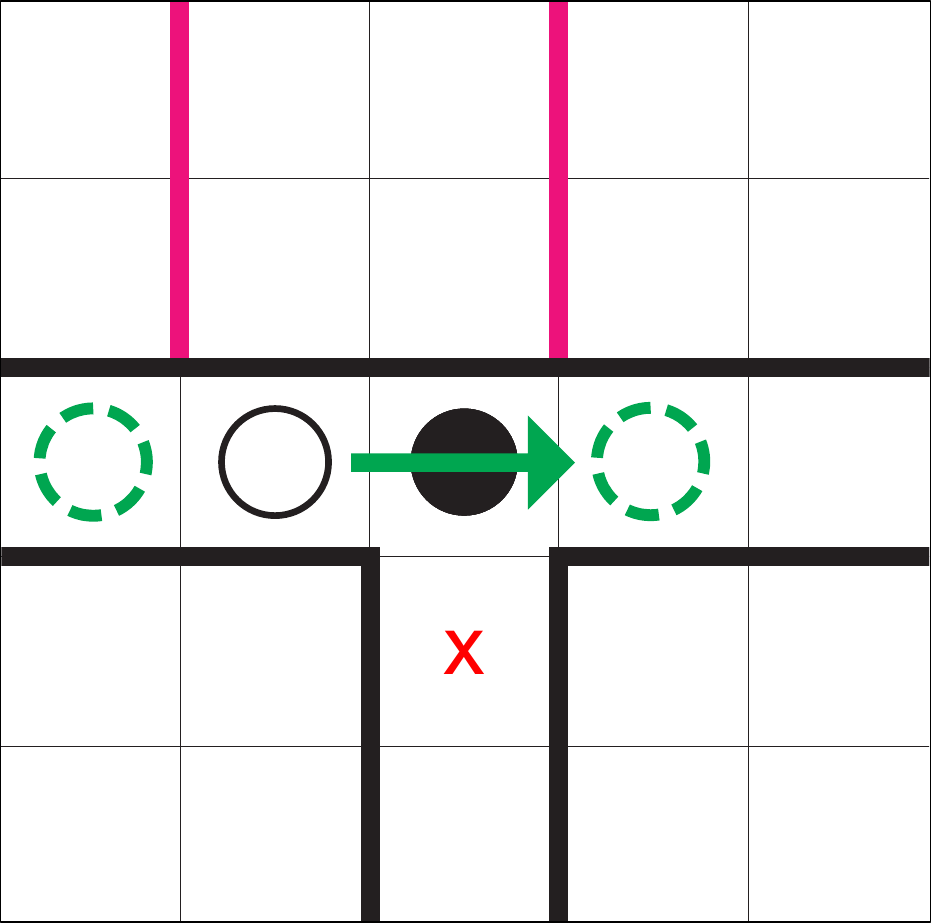}
    \caption{White's options at a railroaded T-junction not starting from the stem. Turning $90^\circ$ is impossible.}
    \label{fig:railroading-T2}
    \end{center}
  \end{minipage}
  \hfill
  \begin{minipage}[b]{0.303\textwidth}
  \begin{center}
    \includegraphics[width=\textwidth]{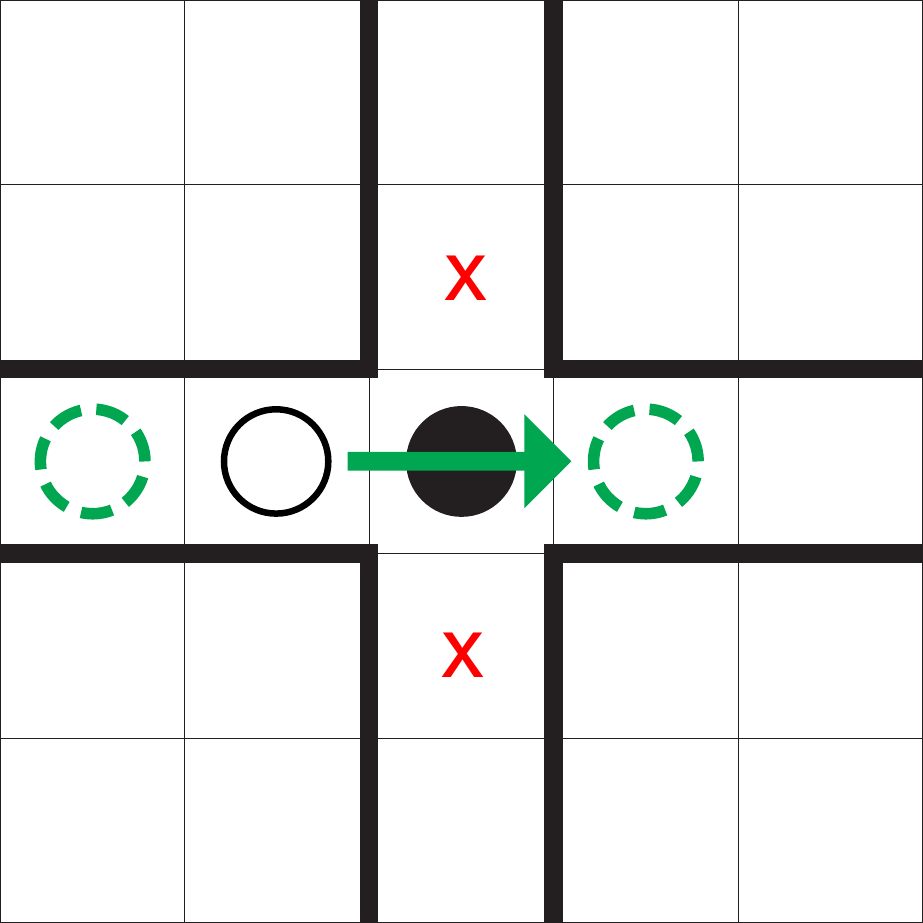}

    
    \caption{White's options moving at a railroaded cross junction from any direction. Turning $90^\circ$ is
    impossible.}
    \label{fig:railroading-cross}
    \end{center}
  \end{minipage}
  
\end{figure}

\subsection{Unblockable T-junctions}
It will be useful in our proof to have some T-junctions where railroading is impossible. Such an \textbf{\textit{unblockable T-junction}}, depicted in Fig.~\ref{fig:unblockable-T-junction}, is functionally a T-junction through which White can move freely regardless of where Black stands. This construction can be rotated to any orientation. All T-junctions in the proof will be unblockable by default unless otherwise indicated.  Unblockable cross junctions are also possible to construct, but not needed or desired in the proof, as we intend railroading to be possible there. 

\begin{figure}[ht]
\centering
    \includegraphics[width=.66\textwidth] 
    {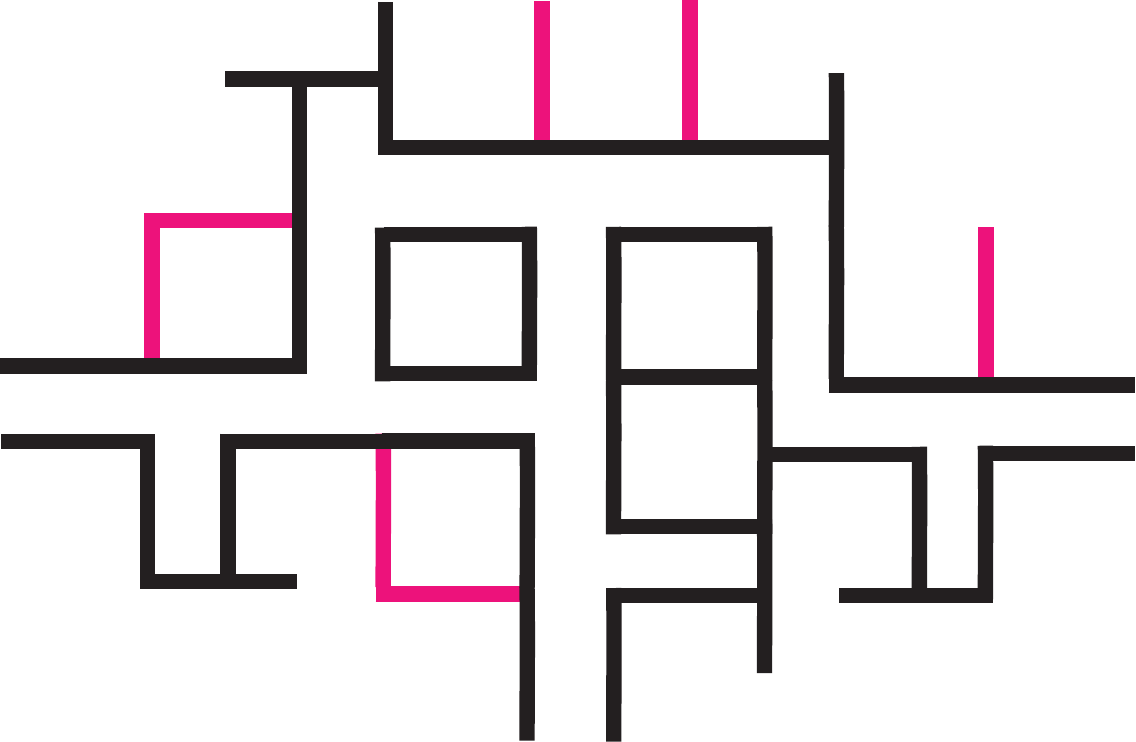}
    \caption{An unblockable T-junction. 
    White can travel freely from any entrance to any other, regardless of the Black pawn's 
       location. 
    }
    \label{fig:unblockable-T-junction}
\end{figure}

\section{Proof}\label{sec:proof}

We first sketch the proof and then provide the details.

\begin{figure}[h]
    \includegraphics[width=\textwidth] 
    {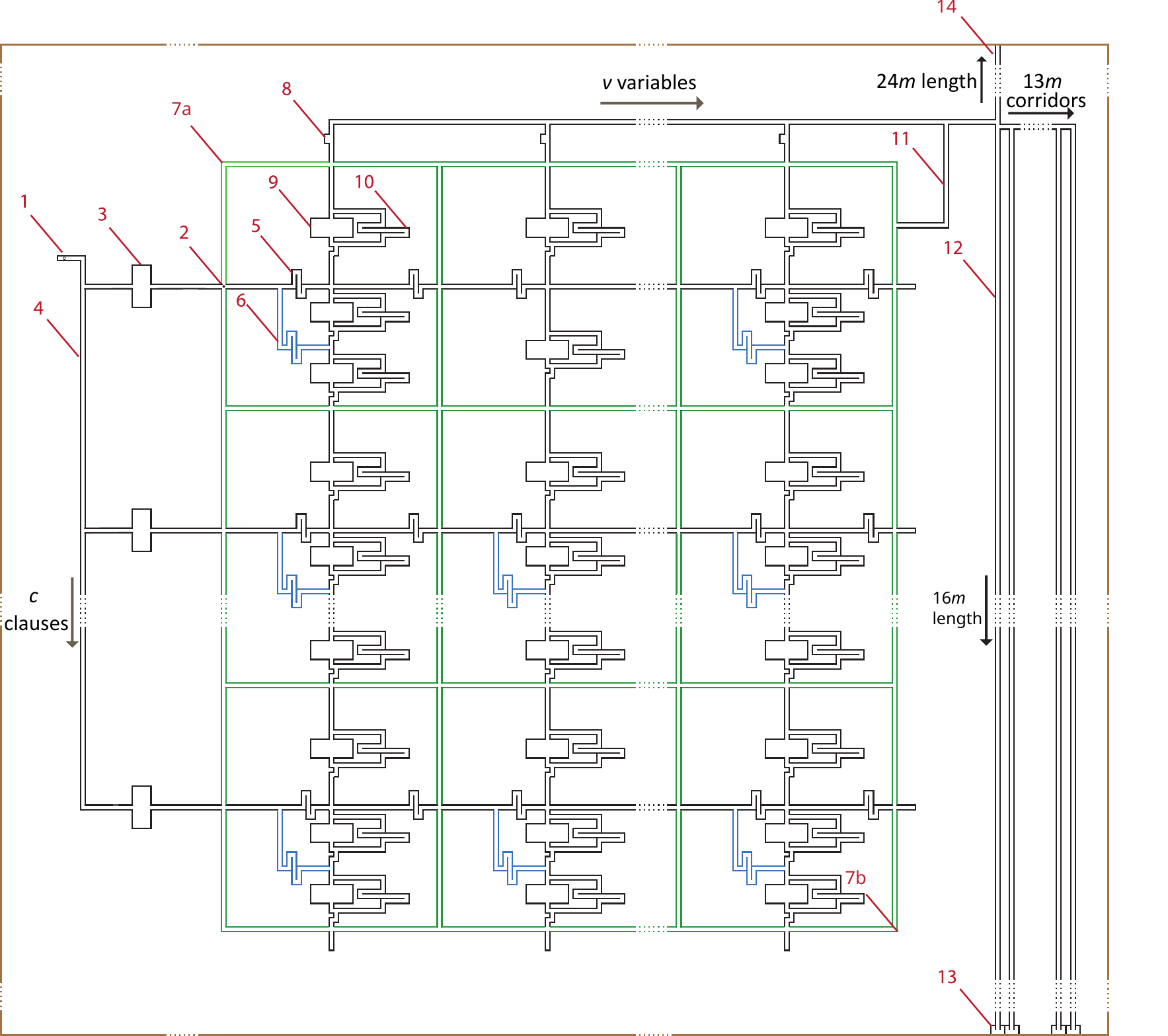}
    \caption{The reduction position at a high level (some details omitted). Image is not drawn to scale. The outer edge represents 
    the edge of the Quoridor board. Labeled items are as follows.\\
       {\color{red}1.}~White starting location. {\color{red}2.}~Black starting location. {\color{red}3.}~Clause entrance chamber. {\color{red}4.}~Chamber access corridor. {\color{red}5.}~Winding road to increase corridor to a specified length~$L$. {\color{red}6.}~Side path for White to force entry into variable gadget, placed by the intersection of every clause and variable gadget for which  
        the corresponding 
        clause in the \gpos\ formula contains 
        the variable. The winding road within increases the length to~$L$. {\color{red}7a.}~The maze (top-left corner). {\color{red}7b.}~The maze (bottom-right corner). {\color{red}8.}~Truth-value assigning box. {\color{red}9.}~Chamber inside variable gadget. {\color{red}10.}~Long and winding road with length $3L$. {\color{red}11.}~Emergency exit accessible through the maze. {\color{red}12.}~Black’s corridors to White’s first rank. {\color{red}13.}~Locations where Black’s corridors to White’s first rank can be walled off. 
        {\color{red}14.}~White's corridor to~Black's~first~rank.}
    \label{fig:labeledreduction}
\end{figure}

\afterpage{\FloatBarrier}

\subsection{Proof outline}
We will reduce \gpos\ to {\sc Quoridor} as follows. Given positive CNF formula~$A$ with~$c \in \mathbb N$ clauses 
and~$v\in \mathbb N$ variables
, we construct a game position as depicted in Fig.~\ref{fig:labeledreduction},  whose details are explained in the proof. The construction consists primarily of a \textit{\textbf{maze}} from which the white pawn must escape in order to reach
a path to 
the top rank and win. The number~$m$ of squares in the maze is determined by a polynomial function of~$v$ and~$c$ discussed later.
The maze is constructed such that White is able to force an escape and win the game if only if Player~I has a winning \gpos\ strategy for $A$.

To each clause (variable) in $A$ there corresponds a \textit{\textbf{row (column)}} in the maze. Note that we have used the term \textit{rank (file)} to denote a straight series of horizontally (vertically) connected squares on the board, whereas the term \textit{\textbf{row (column)}} used here denotes a horizontal (vertical) strip of the maze occupying many ranks (files). We call the set of squares comprising the intersection of one row and column a \textit{cell.}

One exit out of the maze stands at the top of each column.\footnote{The maze also has a single additional exit---the \textit{emergency} exit---described below.}
Each variable's truth assignment is 
represented by a $2 \times 2$ \textit{\textbf{box}} at the top for assigning its 
value (labeled~8). White (Black) simulates Player~I~(II) choosing a variable in $A$ by 
placing 
a vertical (horizontal) wall in the variable's box, representing a {\tt true} {(\tt false)} assignment (see Figs.~\ref{truth-value-assignment1}-\ref{truth-value-assignment3}). 
Note that since walls cannot form a plus, only one wall can fit in each box. 
Thus a vertical (horizontal) wall permanently opens (closes) the exit.
We will see that if Black can assign all variables in one clause {\tt false}, the construction of the maze allows him to delay White's exit indefinitely, for as long as Black himself remains in the maze. This is accomplished by continually railroading White so she cannot turn north, no matter which junction she visits, by always arriving at any junction faster than White because of the many zig-zagging path in the maze that slow White down. We will see that this strategy buys Black sufficient time to wall off~$b-1$ of his own~$b$-many corridors to victory (see Fig.~\ref{fig:labeledreduction}), where~$b:=13m$ is a 
number large relative to the size~$m$ of the maze. 
Doing so leaves Black with
a unique (therefore, unblockable) winning path~$16m$ steps long to the board's bottom rank. 
In this case, Black wins faster than White, whose only path to victory is~$24m$ steps. However, if Black cannot do this---i.e., every clause is {\tt true}---then White has a way to escape the maze and use 
her~$24m$-length corridor while Black still has multiple open corridors of length~$16m$. Black then loses, as White can legally wall off whichever corridor Black enters, one turn before Black reaches the bottom, forcing him to backtrack and walk 
at least 
$16m+16m=32m$ more steps to try futilely 
to win the race via another corridor.

\begin{figure}[h!]
  \centering
  \begin{minipage}[b]{0.3\textwidth}
  \begin{center}
    \includegraphics[width=\textwidth]{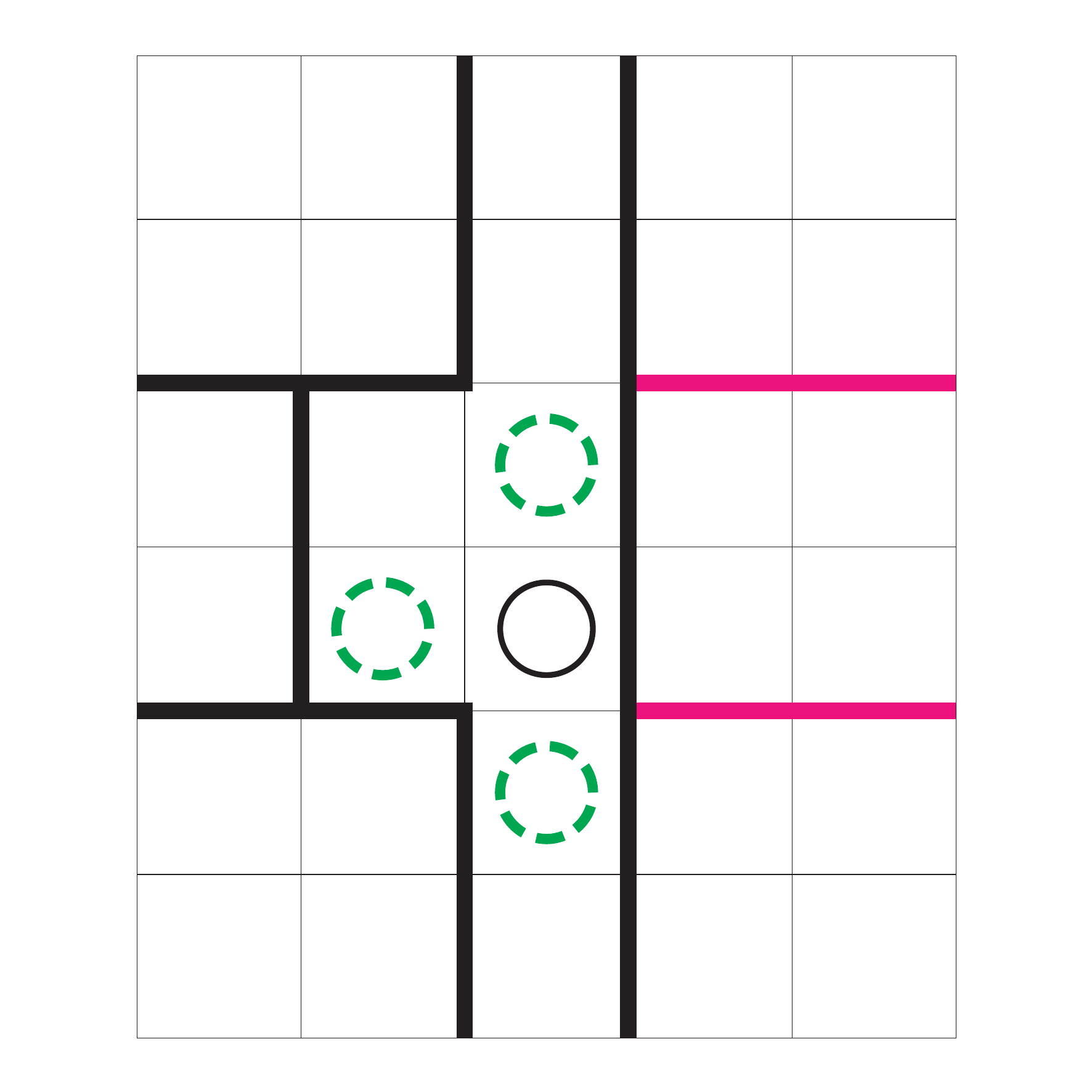}
    \caption{No new wall in the $2\times 2$ area with the pawn. The pawn has three squares to move to.}
    \label{truth-value-assignment1}
   \end{center}
  \end{minipage}
  \hfill
  \begin{minipage}[b]{0.3\textwidth}
    \begin{center}
    \includegraphics[width=\textwidth]{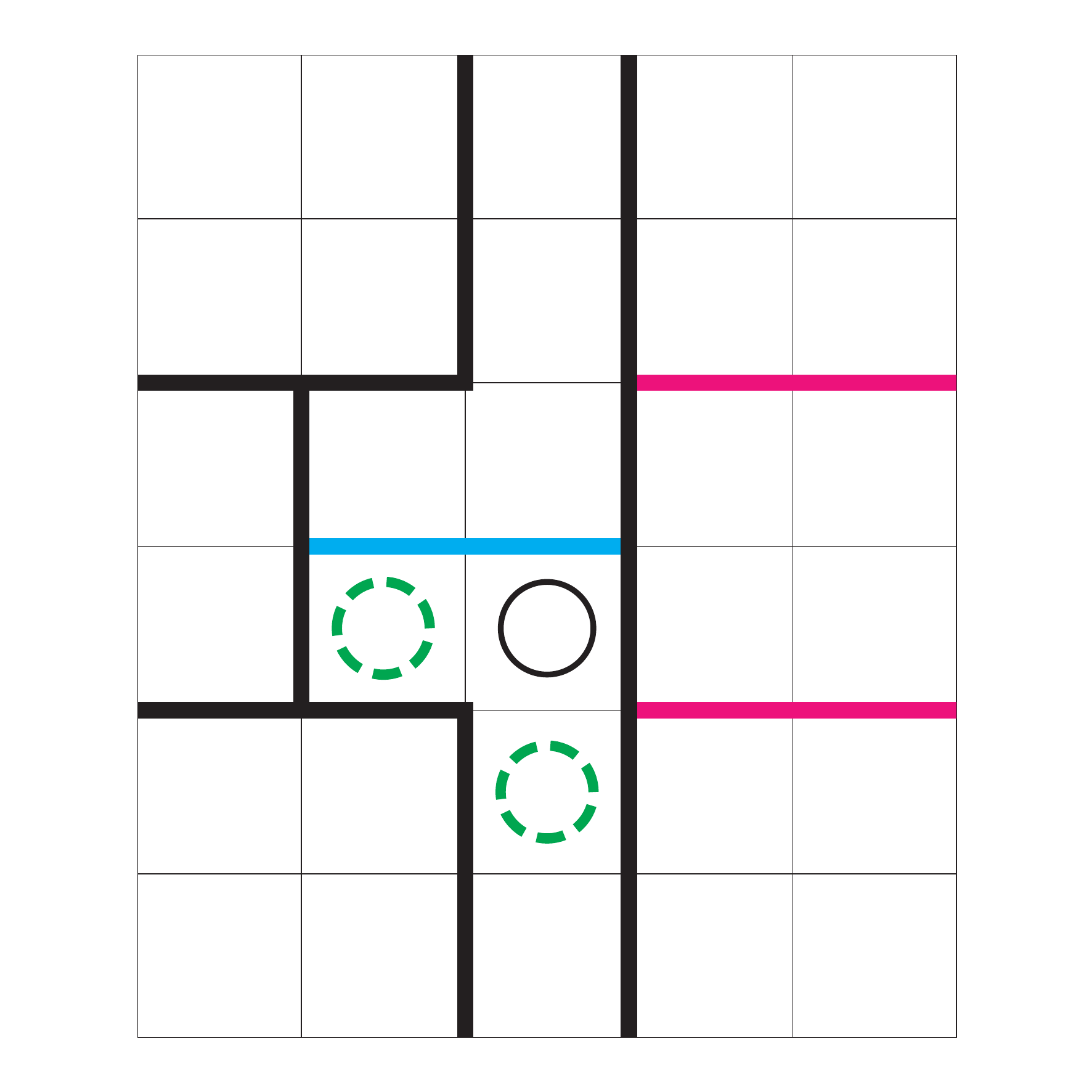}
    \caption{A horizontal wall placed 
    in the $2\times 2$ area. The gate is closed; upward movement is blocked.}
    \label{truth-value-assignment2}
    \end{center}
  \end{minipage}
  \hfill
  \begin{minipage}[b]{0.3\textwidth}
  \begin{center}
    \includegraphics[width=\textwidth]{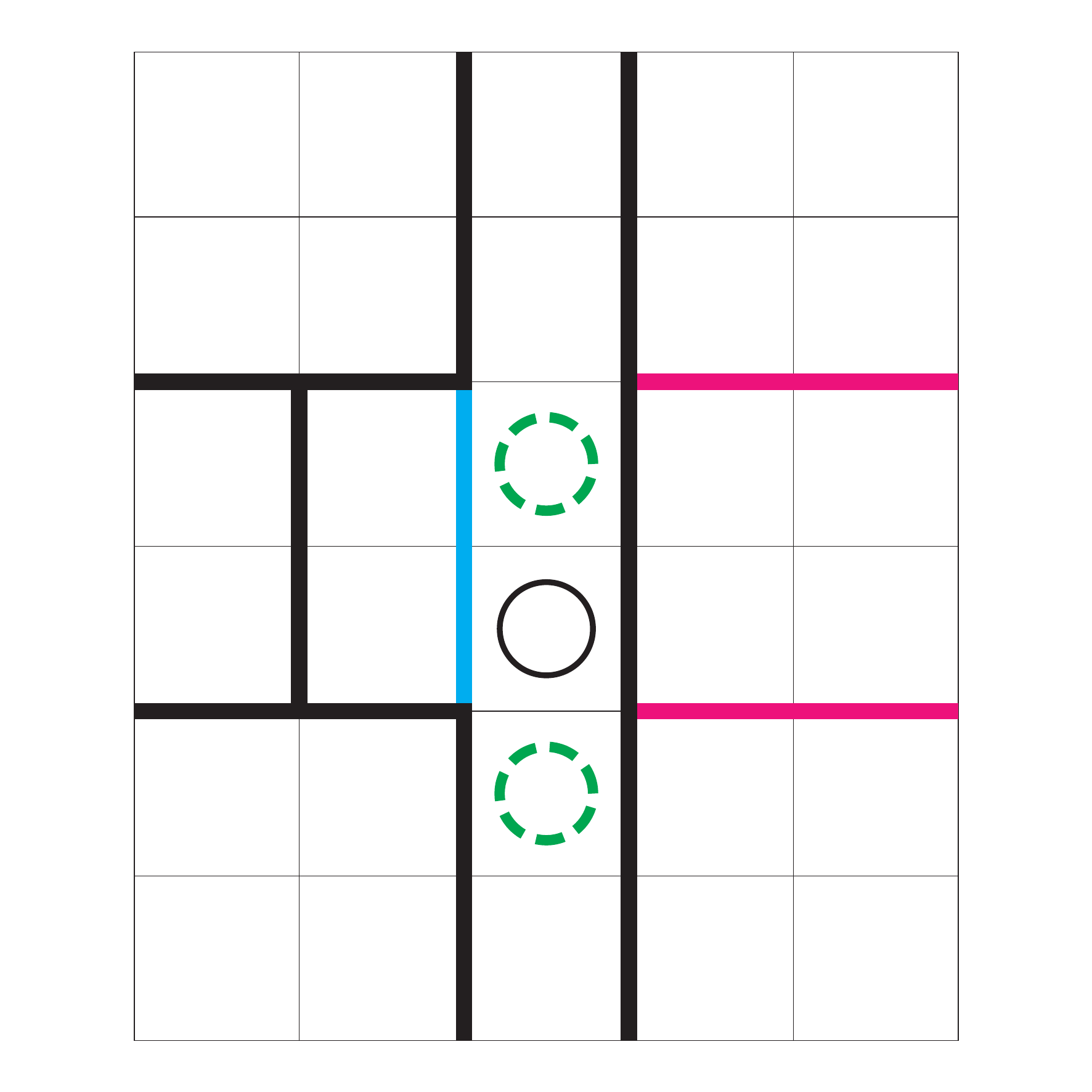}
    \caption{A vertical wall placed in the $2\times 2$ area. The gate is open; free vertical movement is guaranteed.}
    \label{truth-value-assignment3}
    \end{center}
  \end{minipage}
\end{figure}

Overall, the course of play thus follows four stages. First, White and Black alternate simulating the choosing of variables. Second, Black chooses a clause gadget for White's pawn to enter, by blocking off all other clause gadgets with walls except for one (which must remain accessible, under  
the rules). Third, White chooses a variable gadget for one of the variables within the chosen clause. The construction of the position allows White's pawn to force entry into the chosen variable gadget (in general, such entry is possible if and only if the variable is an element of the chosen clause). Fourth, White successfully escapes through the north exit of the chosen variable gadget if and only the exit is open (i.e., the variable is~{\tt true}). If so, White reaches her victory rank faster than Black can reach his. If not, White can only escape another way, which Black can delay indefinitely until such time as he has finished closing $b-1$ of his victory corridors, after which he exits the maze himself and wins the race.

\subsection{Proof details}

We now describe the position construction in
greater detail, 
then discuss the flow of play through the four stages given optimal play by both players.

\subsubsection{Position construction and terminology}
The \textbf{\textit{maze}} is the area within the large rectangle with corners labeled~7a and~7b in Fig.~\ref{fig:labeledreduction}, including the small eastern and southern \textbf{\textit{dead ends}}.  The \textbf{\textit{shortcut corridors}} are the vertical (horizontal) corridors within the maze that remain straight for the entire length (width) of the maze. Vertical (horizontal) shortcut corridors separate the maze columns (rows) that house the variable (clause) gadgets described below. Together, the shortcut corridors form a large rectangular grid whose cells each surround the intersection point of one clause-variable pair. 

The \textbf{\textit{clause gadgets}} are the horizontal corridors, straight save for the structures labeled~5, beginning from the right end of the chambers labeled 3 and terminating at the dead ends on the right of the maze. The \textbf{\textit{variable gadgets}} run vertically from the bottom of the boxes labeled 8 down to the dead ends at the bottom of the maze. Where no confusion arises, we may at times identify clause and variable gadgets with the corresponding clauses and variables in~$A$.

 An \textbf{\textit{elongator}} (Fig.~\ref{fig:elongator}) is a structure designed to lengthen White's travel through a corridor 
 by 
 directing the pawn in a zig-zag. A \textbf{\textit{winding road}}, labeled~5 in Fig.~\ref{fig:labeledreduction}, is a corridor for horizontal travel through a cell 
 containing one or more elongators to
 make
 walking through the winding road take $L$ more steps than if 
 it had been a straight corridor,
 where $L$  
 is a number
 greater than a maze  
 cell's perimeter.
 A \textbf{\textit{long and winding road}}, labeled~10, is similar but for vertical travel and with added length 3L. 

\begin{figure}
\centering
    \includegraphics[scale = 0.24]{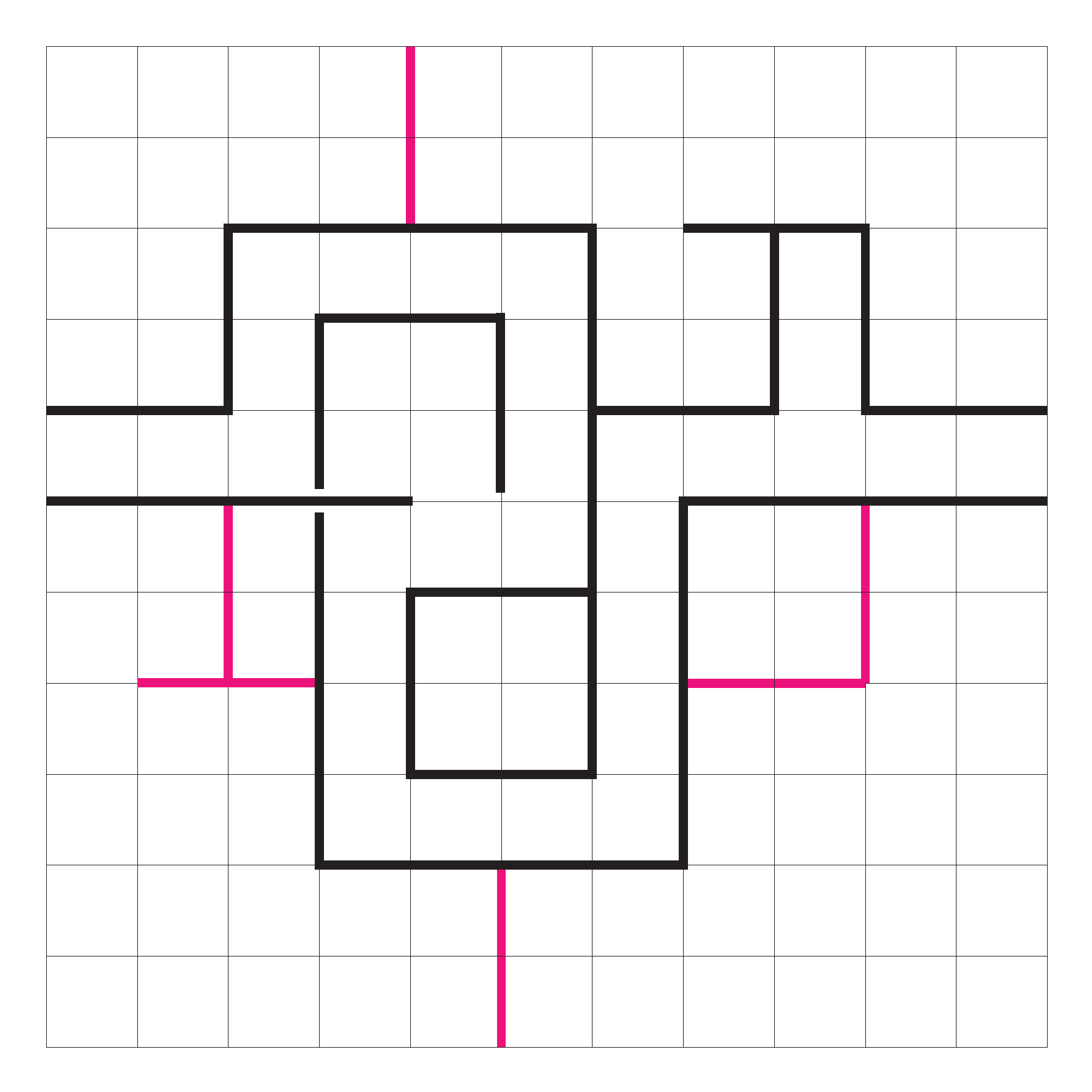}
    \caption{Example of an elongator. Size and shape can vary as needed.}
    \label{fig:elongator}
\end{figure}

A \textbf{\textit{side path}} (labeled~6), is a corridor giving passage from a clause gadget's corridor into a variable gadget's corridor.  
Inside the side path is a winding road. A side path from a clause gadget to a variable gadget is included when and only when the corresponding variable 
is in the corresponding clause 
in~$A$.  
An example of the absence of a side path---signifying that the clause does not contain the variable in question---can be found in the cell to the immediate right of the cell with the side path labeled~6 in the figure.  

We will see later that Black is unable to prevent White from entering a variable gadget from a clause gadget if and only if   
a side path is present between the two gadgets. Call the T-junction (oriented~$\dashv$) where a side path meets a variable gadget a \textbf{\textit{variable gadget entrance}}. Unlike  
all other T-junctions in the game position, variable gadget entrances are standard (blockable) T-junctions as in Fig.~\ref{fig:railroading-T2}, rather than unblockable T-junctions as in Fig.~\ref{fig:unblockable-T-junction}.

The \textbf{\textit{emergency exit}}, labeled 11, is an additional corridor out of the maze. This ensures that White always has 
an open path to victory once White is in the maze, as required under the rules. However, satisfying this formality is its only purpose, as  
Black can always 
keep  
White from using it by railroading unless Black is about to win the game.

The \textbf{\textit{black (white) victory corridors}}, labeled 12 (14), are vertical corridors to the bottom (top) rank. These are the only paths to victory for either player. White has one victory corridor with length~$24m$, 
whereas Black has~$b := 13m$  
victory corridors with length~$16m$
each. 
Each 
black victory corridor additionally contains a small (width~3) chamber at the bottom, labeled~13, where a wall can be placed to block it (see Fig.~\ref{fig:blackvictorycorridors}). So Black's corridors are each spaced three squares apart.

Note that the players' victory corridors together span the whole board's length, so the board height~$n$ is exactly $24m+16m=40m$. The board width is then 
also~$n = 40m$, which suffices
to contain the~$13m$ black victory corridors 
(whose total width is~$3\cdot 13m=39m$) 
with~$m$ files of room to spare for the maze's width~$m_w$ and the combined width of the external chambers and corridors in the west (all totaling below an additional~$m_w + m_w \ll m$).

\begin{figure}
\centering
\begin{subfigure}{\textwidth}
\centering
   \includegraphics[width=0.9\textwidth]{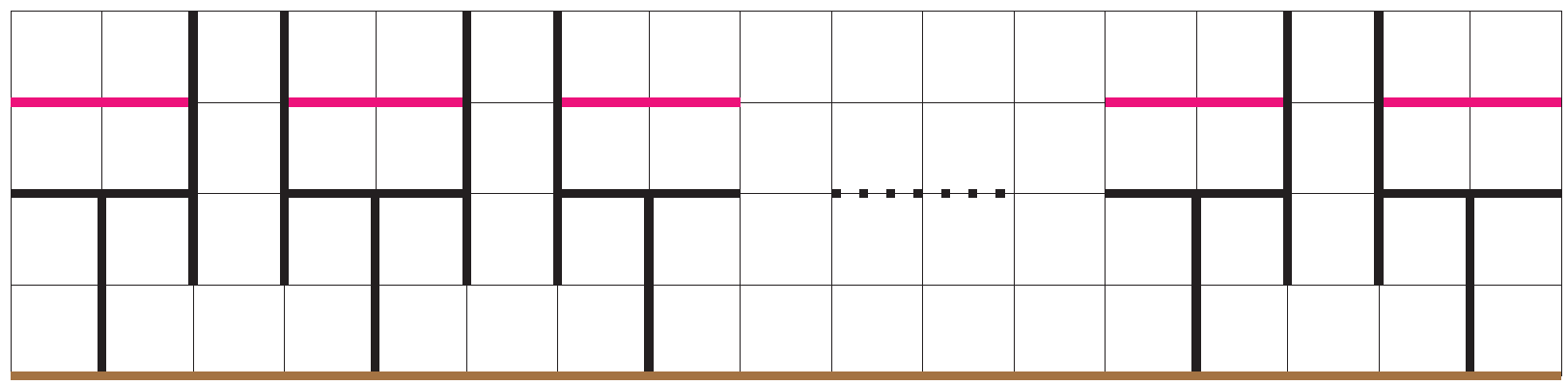}
   \caption{}
   \label{fig:blackempty} 
\end{subfigure}

\ 

\begin{subfigure}{\textwidth}
\centering
   \includegraphics[width=0.9\textwidth]{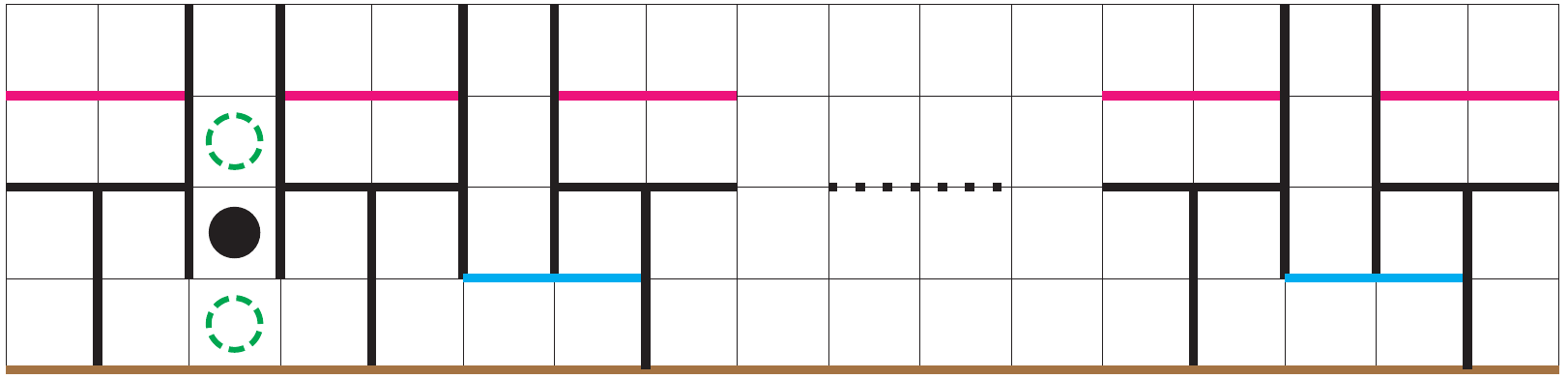}
   \caption{}
   \label{fig:blackfull}
\end{subfigure}
\caption[Label 13]{(a) The structure labeled~13 in the initial position. 
(b) The structure labeled 13 with walls placed so that the second and last black victory corridor are closed off. A black pawn is placed to show moving options.}
\label{fig:blackvictorycorridors}
\end{figure}

\FloatBarrier

The players' pawns stand at the locations indicated in Fig.~\ref{fig:labeledreduction}. Black has an abundance of wall pieces remaining in his supply---say,~$15m$---sufficient to be effectively unlimited. White has only~$\left\lceil\frac{v}{2}\right\rceil + 1$ walls---that is, one more than half the number of variables in~$A$, rounded up.\footnote{Our position, in which Black has so many more wall pieces remaining than White, is reachable from the starting position of an $(n \times n)$ Quoridor game if the vast majority of walls currently on the board have been placed there by White.} Of these walls,~$\left\lceil\frac{v}{2}\right\rceil$ are for simulating the selection of variables in $A$, with the one remaining piece saved for 
Black's victory corridors in the endgame.  
Because of this disparity in remaining wall pieces, Black is able to stay still in a junction to railroad White as long as desired ($b-1$ turns), while
White could never do the same in the reverse situation; if White tried, she would quickly run out of walls to place as Black simply jumps back and forth, leaving White no choice but to then make a pawn move.

To the west of each clause gadget entrance and in certain locations within each variable gadget are special \textbf{\textit{chambers}}, labeled~3 and~9 respectively, each of which can be blocked off by spending~$\left\lceil\frac{v}{2}\right\rceil + 2$ wall pieces. Only Black has sufficient walls to do this. The western chambers outside the maze are 
\textbf{\textit{clause entrance chambers}}, depicted in Fig.~\ref{fig:chamber}, west of which is a connecting vertical corridor labeled~4, called the \textit{\textbf{chamber access corridor}}. The horizontal distance from this corridor to the chambers is small, but larger than the number~$v$ of variables. Inside the maze, the chambers 
are identical to the clause entrance chambers, but rotated 
$90^\circ$ counterclockwise. 

\begin{figure}
\centering
    \includegraphics[scale = 0.33]{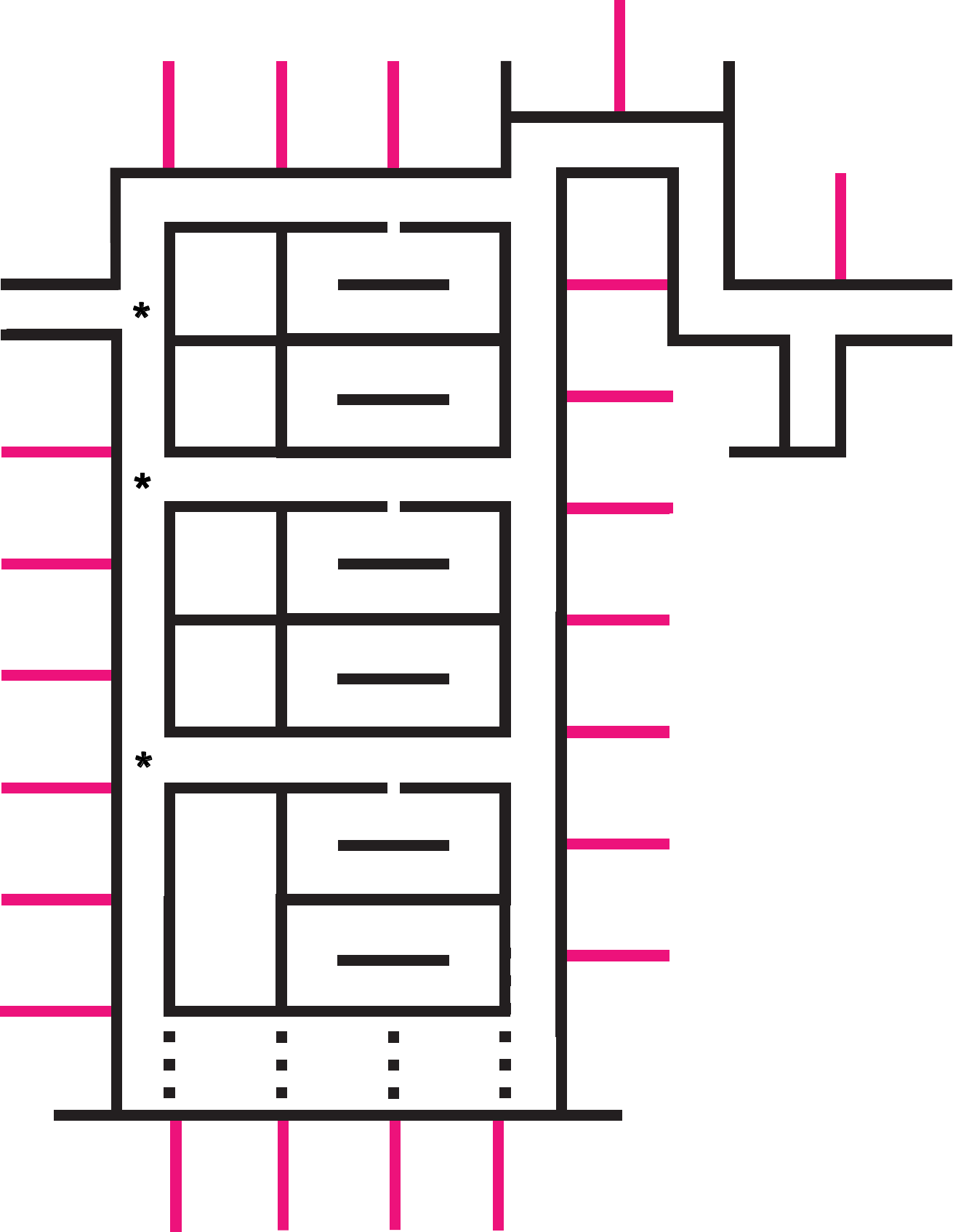}
    \caption{Simplified image of a clause entrance chamber. The actual gadget has unblockable T-junctions in place of the junctions marked (*), with 
    the rest of the gadget
    appropriately 
    scaled up in size 
    to make space. 
    Black can close the chamber by
    placing vertical walls in the open spaces (indicated by small white gaps) to block 
    off the horizontal corridors. The dotted lines denote a
    continuation of the pattern depicted 
    so that there are $\left\lceil\frac{v}{2}\right\rceil + 2$ horizontal corridors in total.
    }
    \label{fig:chamber}
\end{figure}

\begin{proposition}
     The position as a whole has size polynomial in the size of~$A$.
\end{proposition}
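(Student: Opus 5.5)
The plan is to bound the board side length $n$ by a polynomial in $v+c$. The rest then follows quickly. The position is specified by the pawn locations, the wall placements, and the two wall supplies. There are at most $O(n^2)$ walls on the board, and the supplies ($15m$ for Black and $\lceil v/2\rceil+1$ for White) are at most polynomial. So the encoding has size polynomial in $n$, and it can clearly be written out in time polynomial in $n$. Since $n=40m$, it suffices to show that the maze size $m$ is polynomial in $v$ and $c$.

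To bound $m$, I would go through the components inside the maze and bound each one's dimensions in terms of $v$, $c$, and $L$.

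\begin{enumerate}
\item A clause entrance chamber or variable-gadget chamber has $\lceil v/2\rceil+2$ corridors. Each unblockable T-junction has constant size, so a chamber has side length $O(v)$. The distance from the chamber access corridor to the chambers only needs to exceed $v$, so it is $O(v)$ as well.
\item Each cell must contain the chamber (where applicable), a side path with a winding road, the truth-value box, and constant-size junction structures. Before fixing $L$, this gives a perimeter $O(v)$.
\item We need $L$ to exceed the cell perimeter. Here an apparent circularity appears: the winding roads of length $L$ must fit inside the cells whose perimeter bounds $L$ from below.
\end{enumerate}

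Resolving that circularity is the main obstacle. My plan is to use elongators, which pack $L$ extra steps of zig-zag into a region of area $O(L)$, hence side length $O(\sqrt{L})$. A cell of side $s$ then has perimeter $4s$. So we want $s=\Theta(v)+O(\sqrt{L})$ with $L>4s$. Taking $L=C v^2$ and $s=C' v$ for suitable constants works, because $\sqrt{L}=O(v)$. The long and winding road of length $3L$ likewise fits in a region of side $O(\sqrt{L})=O(v)$. If the elongator geometry only gives linear packing, I would fall back on an explicit fixed-point choice. Even then, an elongator of area $O(L)$ inside an $s\times s$ cell only needs $s^2\gtrsim L\approx 4s$, which holds for $s$ larger than a constant. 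Either way the cell side is polynomial in $v$.

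The maze has $c$ rows and $v$ columns of cells, plus the shortcut corridors between them and the dead ends. Its dimensions are therefore $O(c\cdot s)\times O(v\cdot s)$, so $m$ (the number of maze squares) is $O(cv\,s^2)$, which is polynomial. The external structures lie outside the maze: the chambers, the chamber access corridor, the emergency exit, and the starting regions. These have width at most a constant times the maze width, consistent with the estimate $m_w+m_w\ll m$ in the text. The victory corridors have lengths $16m$ and $24m$, and there are $13m$ black corridors of width $3$. All of these fit within the $40m\times 40m$ board as computed earlier. Hence $n=40m$ is polynomial in $|A|$, and so is the whole position.
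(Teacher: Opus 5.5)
Your proposal is correct and follows the same route as the paper. Chambers whose dimensions grow linearly in $v$ give cells of side $\Theta(v)$, so the maze is $O(cv)\times O(v^2)$, $m$ is polynomial, and $n=40m$ finishes the argument. Your explicit choice of $L$ makes precise a consistency check that the paper leaves implicit: a serpentine elongator packs $L$ extra steps into area $O(L)$, so it fits inside a cell whose perimeter is still below $L$. The ``linear packing'' fallback is unnecessary, and it would not actually close the fixed point if the packing really were only linear, since your ``even then'' estimate silently reuses the area bound.
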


\begin{proof}
    Since the height (width) of a chamber outside (inside) the maze grows linearly with~$v$, the height~$m_h$ of the maze grows with~$vc$ and the width~$m_w$ grows with~$v^2$. So~$m=m_h m_w$ is determined by a polynomial function of~$v$ and~$c$. Since the board length~$n$ 
    is~$24m+16m = 40m$, the position size 
    is polynomial in the size of~$A$. 
\end{proof}

\subsubsection{Stages of play}
We now analyze
in detail
the four stages of play when both players play optimally.

In Stage~1, players simulate choosing variables, such that variables chosen by White become {\tt true} and ones chosen by Black become {\tt false}. This is done by placing walls in the~$2\times 2$ boxes above  variable gadgets in accordance with Figs.~\ref{truth-value-assignment1}-\ref{truth-value-assignment3}. 
Later, in Lemma~\ref{lem:stage1}, we will show that this play is optimal for both players.

    We remark that Stage~1 may end before all variables of $A$ are chosen, if the outcome of the \gpos\ simulation is already decided (i.e., every clause is already {\tt true} or one clause is already {\tt false}). Once it ends, Stage~2 is then characterized by White aiming to enter one of the clause gadgets in the maze via a clause entrance chamber. We now show that, in line with \gpos, it is Black (Player II) who chooses which clause gadget White's pawn enters.

\begin{lemma}\label{lem:White-maze-entry}
    White can always enter the maze from White's starting location. This takes fewer than $3m$ steps. However, White must enter the maze in a clause gadget of Black's choosing. 
\end{lemma}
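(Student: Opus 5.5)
The plan has three parts. First, show that White always has a short route into the maze. Second, show that Black can shut all clause entrances but one. Third, show that Black loses nothing by doing so and that White cannot enter any other way. Throughout, I use the geometry of Fig.~\ref{fig:labeledreduction}: White starts west of the maze near the chamber access corridor (labeled~4). That corridor connects, through short horizontal corridors, to each clause entrance chamber (labeled~3), and each chamber opens east into the start of a clause gadget.

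\textbf{Step 1: reachability and the $3m$ bound.} I will bound White's route by the structures it passes through, all of which lie outside the maze proper or in its first column strip. The route goes from the start to the chamber access corridor, along that corridor, through one chamber, and into the clause gadget. The chamber access corridor spans at most the maze height $m_h$. The horizontal stretch to a chamber and the chamber itself each have size $O(v)$, which is far below $m$. The start is within distance $O(m_h+m_w)$ of corridor~4. Since $m=m_hm_w$ and both $m_h,m_w\ge 2$, every term is at most about $m$, and the sum is less than $3m$. I also need that Black cannot lengthen this route beyond its walled structure. All of these passages are corridors in the paper's sense, so no wall can be inserted into them, and their junctions are unblockable T-junctions, so railroading cannot stop White either.

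\textbf{Step 2: Black chooses the clause.} A chamber can be closed with $\lceil v/2\rceil+2$ walls. Black has $15m$ walls, which is enough to close all but one of the $c$ chambers. The rule forbidding a position with no path to victory prevents him from closing the last one, because while White is outside the maze every one of her paths to victory runs through some clause gadget. White, by contrast, has only $\lceil v/2\rceil+1$ walls, one fewer than needed to close a chamber. So she cannot seal chambers herself, and in particular cannot protect a preferred chamber by closing the others before Black does.

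The timing argument here is the \textbf{main obstacle}. Black must be able to finish closing the chambers before White gets into a gadget he has not chosen. The distance from corridor~4 to any chamber was chosen larger than $v$, and closing a chamber takes several wall placements. The key observation is that Black does not have to close every other chamber in advance; he can react. A chamber has $\lceil v/2\rceil+2$ horizontal corridors, so walking through it takes time. Black only needs to close the chamber White is actually approaching, and he has time to do so while she crosses the access stretch. Wall placements cost him nothing in the race, because his own pawn is not needed during Stage~2.

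To make this rigorous I would argue by induction on White's moves. At every moment, Black maintains the invariant that every chamber White can reach before she could exit it is already closed, except the designated one. One wall per White move should suffice, because White needs more than one move to advance through each horizontal corridor of a chamber.

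\textbf{Step 3: no other entry and no loss for Black.} I would check that the emergency exit and the variable exits at the tops of the columns are not entrances White can use from outside. White's only route to them from her start passes through the maze interior, so they give her no alternative way in. I would also check that Black loses nothing by spending these walls. His $15m$ walls still leave well over the $b-1=13m-1$ needed later, since the chambers use only $(c-1)(\lceil v/2\rceil+2)\ll m$ walls.
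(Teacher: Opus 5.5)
Your Step~2 is essentially the paper's mechanism: Black closes each chamber corridor as White probes it, the legality rule forces him to leave one chamber open, and White's $\lceil v/2\rceil+1$ walls cannot close a chamber. The gap is in Step~1, which does not establish the $3m$ bound.

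\textbf{The $3m$ bound.} You bound the length of one unobstructed route (start, corridor~4, one chamber, clause gadget). You justify ignoring Black by claiming no wall can be inserted into these passages. That is false for the chambers: they are exactly where Black inserts walls, as your own Step~2 requires. The $3m$ figure is a bound on White's entire walk against any Black play, and Lemma~\ref{lem:5m} later uses it that way. Black's reactive closing can make that walk much longer than a single route:
\begin{itemize}
\item White enters a horizontal corridor and is walled off just before getting through.
\item She backtracks and tries the next corridor.
\item After exhausting a chamber, she returns to corridor~4 and repeats at the next chamber, potentially for all $c$ chambers.
\end{itemize}
Your single-route estimate says nothing about these detours. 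Separately, summing several terms each ``at most about $m$'' with unspecified $O(\cdot)$ constants would not yield a strict $<3m$ even for one route.

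\textbf{What is missing.} You need a fixed entry strategy for White plus an amortized count. White tries the chambers from the top down and, within each chamber, the corridors from the top down. Under this strategy, whatever Black does, no square of the access corridor or the chambers is walked more than a small constant number of times; the paper uses three. The combined size of all $c$ chambers plus the access corridor is below $m$, which gives fewer than $3m$ steps.

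\textbf{A side caution.} Your remark that Black's pawn is idle in Stage~2 is not harmless for the overall reduction. The proof of Lemma~\ref{lem:stage3} has Black shadow White's vertical moves during this stage. That is why the paper has him spend only one wall per corridor White actually probes, placed at the last moment, rather than closing whole chambers in advance.
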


\begin{proof}
    To enter the maze, White tries 
each of the~$c$ clause entrance chambers in order, starting from the top, until Black allows White to enter a clause gadget. (As later results will make clear, this is optimal, since if White can win, all clause gadgets offer a viable winning path through at least one true variable's exit, so White is happy to enter any gadget. And if White can't win, her decisions are irrelevant.) Within a chamber, White attempts passage east through each horizontal corridor
toward the clause gadget, successively from the top. Black can wall off each corridor in the chamber just before White passes through it (waiting until the last possible moment to do so), forcing White to backtrack  
through the corridor and enter the next corridor 
in the chamber. If and when the last corridor in a chamber is walled, White returns to the chamber entrance and proceeds down the chamber access corridor toward the chamber of the next clause. Recall that the T-junctions in the chamber access corridor are unblockable (as are all T-Junctions in the position, by default), so Black's pawn cannot railroad to prevent White from entering any of the chambers. Eventually, Black must let White exit east through one of the chambers, as walling off all paths to victory is illegal. 
Once this happens, Black can close the door behind White's pawn, preventing White from heading back to seek 
a different clause entrance.

Note that White's $\left\lceil\frac{v}{2}\right\rceil + 1$ walls are insufficient to close any chamber, so White has no influence on which chambers end up open or closed.

Also observe 
that White traverses no single square more than three times throughout  
this process.
Since the 
total 
size of the~$c$ chambers and the chamber access corridor is below~$m$ (in fact, the maze itself contains more than $v\cdot c$ further chambers inside it), entering the maze requires fewer than~$3m$ steps. 
\end{proof}

With White having entered a clause gadget~$c_i$, the game enters Stage~3. We now show that unless White uses a side path to enter a variable gadget, Black has a way of perpetually preventing White from ever moving vertically. 

\begin{lemma}\label{lem:stage3}
    After White enters a clause $c_i$, Black can reach every cross junction in~$c_i$ before White does and use railroading to prevent White from traveling vertically, unless White uses an available side path. Moreover, 
    Black can do this with time to spare, which he can
    spend to wall off excess victory corridors.\end{lemma}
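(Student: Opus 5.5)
The plan is a race-counting argument. We compare, for each cross junction $J$ on clause gadget $c_i$, the number of steps White needs to reach $J$ with the number Black needs, and show that Black always wins the race with slack.

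First, fix the geometry. Label the cross junctions along $c_i$ as $J_1, J_2, \dots, J_v$ from west to east. These are the intersections of the clause corridor with the vertical shortcut corridors separating the variable columns. Between consecutive junctions $J_k$ and $J_{k+1}$, White's only horizontal route is the clause gadget's corridor through the cell. That corridor contains a winding road, which adds $L$ steps beyond the straight distance. Black, by contrast, may move along the straight horizontal shortcut corridor bordering row $c_i$ (above or below it) and drop vertically into $J_{k+1}$ through a vertical shortcut corridor.

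Black's detour through the shortcuts costs at most about one cell perimeter. Since $L$ exceeds a cell's perimeter, Black gains a strictly positive margin on every cell traversed.

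Second, handle the starting phase. When White enters $c_i$ through the chosen clause entrance chamber (Lemma~\ref{lem:White-maze-entry}), Black has been positioned appropriately. During Stage~2 he stands near the chamber access corridor, and he closes the chamber door behind White, so White cannot go back. I would argue that Black reaches $J_1$ before White does. White must still walk the horizontal distance from the chamber access corridor to the clause entrance, which exceeds $v$, and then through the chamber. Black can get to $J_1$ along the western straight shortcut corridor.

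Then I would argue inductively. While Black sits on $J_k$, White cannot turn $90^\circ$ there. This is the cross-junction railroading of Fig.~\ref{fig:railroading-cross}, and it applies because these junctions are deliberately not made unblockable. White's only moves are therefore to continue east, to retreat west, or, if a side path exists in that cell, to take it; taking a side path is the case the statement excludes. If White heads east toward $J_{k+1}$, Black leaves $J_k$ once White is committed to the winding road, and he arrives at $J_{k+1}$ at least $L - (\text{cell perimeter}) > 0$ turns earlier. If White retreats west, the same argument applies symmetrically, because the winding roads lie between every pair of junctions. The final segment, ending at the eastern dead end, offers White no vertical move at all.

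Third, obtain the time to spare. Each time White is walking a winding road, Black has idle turns, roughly $L$ minus the cell perimeter, minus the few turns he needs to reposition. Moreover, whenever White stands adjacent to a railroaded junction, Black can stay put indefinitely. He spends these turns placing walls in the width-$3$ chambers labeled~13 to close black victory corridors. He has $15m$ walls against $b - 1 = 13m - 1$ needed, so his supply is sufficient. White's possible replies are bounded: she has only $\lceil v/2 \rceil + 1$ walls, so she cannot stall forever. If she jumps or dithers, Black simply keeps placing walls.

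The main obstacle I expect is the interaction with White's walls and jumps. White could try to use her few walls to lengthen Black's shortcut route. This is blocked, because the shortcut corridors are true corridors in which no wall fits. She could also try to jump over Black's pawn at a junction. The railroading figures already account for jumping: a jump lands in the straight direction or sideways only when the wall behind forces it, and at an open cross junction the straight landing is available. I would check this case carefully against Fig.~\ref{fig:railroading-cross}.

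A second delicate point is confirming that Black's vertical drop from a horizontal shortcut into $J_{k+1}$ is never slower than White's winding road. This requires the choice $L >$ cell perimeter, and I would state that comparison explicitly.
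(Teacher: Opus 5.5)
Your race-counting plan and induction along $c_i$ are the same as the paper's, but your list of junctions is incomplete, and the inductive step fails as written. The clause corridor of $c_i$ crosses more than the vertical shortcut corridors. Inside every cell it also crosses the vertical corridor of that cell's variable gadget (the cells ``surround the intersection point of one clause-variable pair''), and that crossing is an ordinary, railroadable cross junction. So it is not true that from $J_k$ White's only options are east, west, or a side path. Walking from $J_k$ toward $J_{k+1}$, she reaches the clause--variable crossing. If Black is not standing on it, she turns north there and follows that variable column straight up to its exit, since straight travel through later junctions cannot be railroaded. That lets her reach the exit of any variable, including ones not in $c_i$. It bypasses exactly the side-path encoding of clause membership that the lemma is supposed to enforce. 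Your route for Black uses only horizontal and vertical shortcut corridors, so it never reaches these junctions.

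To repair this, you must treat the junctions along $c_i$ as alternating between shortcut crossings and variable crossings, with a winding road of extra length $L$ between every adjacent pair. Black's route to the next junction is then, as in the paper, down one vertical line, along the nearest horizontal shortcut, and back up the other. One of those two vertical legs necessarily runs along the variable gadget itself, through its internal chambers (label~9). Your dismissal of White's walls, on the grounds that no wall fits in a shortcut corridor, therefore no longer covers Black's route. You need the paper's extra argument here:
\begin{itemize}
\item White holds at most $\lceil v/2\rceil+1$ walls; she may hold more than one if Stage~1 ended early.
\item That is fewer than the $\lceil v/2\rceil+2$ needed to close a chamber.
\item So the detour she can force on Black there is far below $L$.
\end{itemize}

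A smaller point: your base case (``Black has been positioned appropriately'') needs the paper's mechanism. From the start of Stage~2, Black copies each of White's vertical moves, so he stays level with whichever chamber she finally passes through. Meanwhile his corridor-closing walls cost fewer turns than the horizontal steps she wastes.
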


\begin{proof}
    We use induction. For the base case, we show that Black can reach the first cross junction---the one at the entrance to the chose clause $c_i$---before White. Already from the start of Stage~2, after White's pawn first reaches the T-junction nearest to White's starting location, Black can copy any vertical move White makes. Whenever White walks horizontally into one of the corridors of a clause entrance chamber, Black can either use one move to close that corridor or let White walk through and close the corridor behind White. The moves Black spends placing walls in this way are fewer than the moves White spends moving horizontally. So Black reaches the first cross junction first, with time to spare.

    For the inductive step, we argue that if White moves from a cross junction~$j$ in the clause~$c_i$ to a neighboring cross junction in~$c_i$, Black can reach the neighboring cross junction first. When White begins moving horizontally away from~$j$, Black can first wait a turn and block off an excess victory corridor. After this, he races White by moving vertically downward, then horizontally through the nearest shortcut corridor, and back upward to reach the adjacent cross junction, all before White makes it across the one horizontal corridor in the clause gadget. This succeeds because White's passage through a winding road lengthens White's journey to~$L$ steps, which is greater than Black's route.\footnote{Note that the vertical leg of Black's journey that is not through a shortcut corridor passes straight through the chamber(s), demanding only very minimal 
    horizontal deviations---see Fig.~\ref{fig:chamber}, rotated~$90^\circ$. Unimpeded, Black passes through 
    the corridor in a nearly straight path. In principle, White could delay Black by placing walls in the chamber, but under normal circumstances, White only has one wall piece in stock after Stage~1. Even if White has more (because of the possibility, noted earlier, that Stage~1 ended before all variable exits were selected), White 
    never has the $\left\lceil\frac{v}{2}\right\rceil + 2$ needed
    to close the chamber completely, and the maximum delay White can impose on Black's vertical movement by placing walls here is, by construction of the position, still far  
    below~$L$ steps.} If White for some reason chooses midway to retrace steps and return to junction~$j$, Black can do the same, after which nothing has changed from the original situation except that Black is closer to walling~$b-1$ victory corridors.    
\end{proof}

While we now see that Black can railroad White away from turning north when there is no side path, the presence of side paths from~$c_i$ to all variables~$v_j \in c_i$ enables White to turn into any of the variable gadgets for such~$v_j$. To see that this is the case, recall that the T-junction at the entrance of a side path---the north end---is unblockable, unlike the T-junction oriented $\dashv$ at the exit on the east, which is also the variable gadget entrance. As seen in Figs.~\ref{fig:railroading-T1}-\ref{fig:railroading-T2}, the latter junction is blockable when, and only when, White wishes to travel into the stem, which is not the case at present.

We now proceed to Stage~4. In a similar vein to Lemma~\ref{lem:stage3}, we 
show that after White has chosen a variable gadget to enter, Black can prevent White from escaping and moving to 
a different gadget. Therefore, her 
only way out, if any, is through the exit above the~$2\times 2$ box.

\begin{lemma}\label{lem:stage4}
After White enters a variable gadget in Stage~3, Black can again reach junctions before White, with time to spare, and prevent her from leaving the gadget unless the north exit is open for White to escape the maze entirely.
\end{lemma}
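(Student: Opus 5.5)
The plan mirrors the inductive railroading argument of Lemma~\ref{lem:stage3}, transposed from the clause gadget (horizontal travel) to the variable gadget (vertical travel). First I would fix the moment White's pawn passes the variable gadget entrance (the $\dashv$ junction at the east end of the side path) into the variable gadget $v_j$. At that instant White has just traversed the side path's winding road, which costs her at least $L$ steps. During those steps Black can travel from his railroading post at the cross junction of $c_i$ (where he stood by Lemma~\ref{lem:stage3}) to the cross junction directly below or above the entrance in column $v_j$. His route goes along the shortcut corridors bounding the cell, so it has length at most the cell's perimeter, which is less than $L$. Black therefore arrives first, with spare moves that he spends walling off excess victory corridors. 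He also closes the side path behind White if needed, so she cannot return into $c_i$.

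Next I would run an induction over the cross junctions of the column $v_j$, exactly as in Lemma~\ref{lem:stage3} but with the roles of horizontal and vertical swapped. Suppose Black currently occupies the cross junction $j$ that White is approaching, or has just railroaded her at. Then White can only continue vertically, since railroading at a cross junction forbids all $90^\circ$ turns. Her vertical passage to the neighbouring cross junction (up or down) goes through a long and winding road, of added length $3L$, or through a chamber labeled~9. Meanwhile Black moves sideways into the adjacent vertical shortcut corridor, travels along it, and comes back in, again in fewer than $L$ steps. This handles the slack: $3L$ exceeds the detour plus the extra moves. Those extra moves include one wall placement per leg, plus any delay White can impose with her at most $\lceil v/2\rceil+1$ walls in chambers, which by construction is far below $L$. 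If White retreats midway, Black mirrors her, as before. The chambers inside the gadget play the same role as in Lemma~\ref{lem:White-maze-entry}. Black can seal one completely behind or in front of White with $\lceil v/2\rceil+2$ walls, while White can never seal one against Black. So White cannot leave via the bottom dead end, and the only junctions she ever reaches while inside the gadget are cross junctions that Black reaches first.

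Third, I would finish with a case analysis on the box at the top of $v_j$. If the box holds a vertical wall (the variable is {\tt true}), the exit is permanently open, and White walks north out of the maze; Black cannot block it legally, nor railroad there, since the top is not a cross junction. If the box holds a horizontal wall, the exit is closed. If the box is empty, Black plays a horizontal wall there himself, legally, because the emergency exit keeps a path open. In both of these cases White is confined, her only remaining exit being the emergency exit, which Black railroads. This gives the lemma's conclusion.

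The main obstacle I expect is the step where White first leaves the side path. The $\dashv$ variable gadget entrance is deliberately blockable, so I must check that Black cannot exploit it to keep White out. I also need to show that once White is in the gadget, Black's detour from the clause row to the column's cross junctions really is shorter than $L$ even when White immediately heads north rather than south. This may require choosing the vertical position of the side path's exit relative to the first long and winding road, so that White always meets a $3L$ segment before her next cross junction.
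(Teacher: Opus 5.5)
There is a genuine gap. Your induction runs only over the cross junctions of column $v_j$, and you claim these are the only junctions White meets inside the gadget. That is false by construction. For every clause $c_k$ containing $v_j$, the column contains that clause's variable gadget entrance, a $\dashv$ T-junction.

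Walking vertically in the column, White can reach such a junction and turn west into its stem. From there she goes backward through the side path and emerges into $c_k$ through the unblockable T-junction at the side path's north end. She then walks along $c_k$ to the side path of a true variable of $c_k$. As noted after Lemma~\ref{lem:stage3}, Black can no longer stop her from entering that gadget and leaving north.

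Your remedy of ``closing the side path behind White'' is not available. Side paths are corridors, built precisely so that no wall can be placed inside them. Even if it were possible, it would only address $c_i$'s own side path, not those of the other clauses containing $v_j$. Black's only tool is railroading at the $\dashv$ junction itself, which is blockable exactly for turns \emph{into} the stem. That is the purpose of making these junctions blockable: keeping White in the gadget, not keeping her out. The worry in your last paragraph is therefore aimed in the wrong direction.

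The missing case is that Black must also reach each such $\dashv$ junction before White does. The paper handles it as follows. Between a $\dashv$ junction and an adjacent cross junction of the column, Black travels backward through the side path (west, then north). He then goes east along the clause corridor to the cross junction above, or west to the shortcut corridors leading to the one below.

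This route passes through at most two winding roads plus a small overhead, so it costs fewer than $3L$ steps. White needs $3L$ through the long and winding road. This is the case where the factor $3$ is really needed.

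Your cross-to-cross estimate of ``fewer than $L$'' also undercounts: by the paper's accounting, Black's detour uses one winding road. That is harmless against $3L$. The same comparison settles your concern about White heading north rather than south right after entering. From the entrance T-junction, every neighbouring junction is $3L$ away for White and under $3L$ for Black. So no special placement of the side path's exit is required.
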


\begin{proof}
    The only way out of a variable gadget, other than the northern exit, is to attempt to reach another gadget via a cross junction or variable gadget entrance (T-junction). Both of these can be neutralized by railroading, provided that Black can always reach them before White. We now show that this condition indeed holds.
    
    When White is in a variable gadget, the chambers are useless to White, since Black can close them off when White attempts to pass through them (as in the proof of Lemma~\ref{lem:White-maze-entry}). Therefore, the only route from one junction to another is through a long and winding road. This requires~$3L$ steps. To win the race from one cross junction to a vertically adjacent cross or T-junction, Black follows a similar method to that of Lemma~\ref{lem:stage3}. That is, to travel between two adjacent cross junctions, Black moves west, then vertically, then east, using only one winding road and two shortcut paths in total. This is below~$3L$ steps. To travel from a T-junction to a cross junction (the other direction is the same in reverse), Black 
    moves
    backward through the side path---west then north---then continues east or west along the clause gadget corridor, depending on which cross junction needs to be reached. East leads to the cross junction above the T-junction, while west leads to the shortcut paths 
    accessing 
    the 
    one
    below the T-junction. In both cases, Black 
    passes 
    through no more than two winding roads and a bit of additional distance, again totaling fewer than~$3L$ steps.
\end{proof}

The preceding results have laid the groundwork for proving Corollary~\ref{cor:sim-escape} and Lemma~\ref{lem:escape-Quor}, which will combine to show that Black wins the Quoridor game if and only if Player~II wins the simulated \gpos\ game, completing our proof by reduction.

\begin{corollary}\label{cor:sim-escape}
     Black can prevent White from escaping the maze indefinitely, if and only if Player~II is the winner of the \gpos\ simulation---i.e., formula~$A$ is false.
\end{corollary}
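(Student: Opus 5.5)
The plan is to prove the two directions separately, chaining Lemmas~\ref{lem:White-maze-entry}, \ref{lem:stage3} and~\ref{lem:stage4}. We take as given, to be justified later in Lemma~\ref{lem:stage1}, that Stage~1 is played as a faithful simulation of \gpos. Concretely, each wall placed in a $2\times 2$ box fixes one variable: vertical means {\tt true}, horizontal means {\tt false}. Since walls cannot form a plus, each box can hold only one wall, so a chosen value is permanent. The north exit of a variable gadget is therefore open if and only if that variable is {\tt true}.

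\textbf{($\Leftarrow$) Formula~$A$ is false, so Black can prevent escape indefinitely.} Then some clause $c_i$ has all of its variables set to {\tt false}.
\begin{enumerate}
\item By Lemma~\ref{lem:White-maze-entry}, Black forces White to enter the maze in the clause gadget $c_i$.
\item While White is in $c_i$, Lemma~\ref{lem:stage3} lets Black railroad every cross junction of $c_i$. White therefore cannot travel vertically except by a side path. Side paths exist only into variable gadgets $v_j$ with $v_j\in c_i$.
\item If White takes such a side path, Lemma~\ref{lem:stage4} confines her to the gadget of $v_j$ unless its north exit is open. It is not open, because $v_j$ is {\tt false}.
\item The remaining ways out are the emergency exit, the eastern and southern dead ends, and re-entry into another gadget. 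The dead ends lead nowhere. Every route to the emergency exit, and every re-entry into another gadget, passes through a cross junction or a $\dashv$ variable gadget entrance. Lemmas~\ref{lem:stage3} and~\ref{lem:stage4} say Black reaches each of these first, so he can railroad them.
\end{enumerate}
Because Black keeps winning these races with time to spare, and has about $15m$ walls, he can sustain the blockade for as long as he likes. This gives escape prevention ``indefinitely'', in the sense of for as long as Black remains in the maze.

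\textbf{($\Rightarrow$) Formula~$A$ is true, so White escapes.} We argue the contrapositive: if every clause contains a {\tt true} variable, White escapes.
\begin{enumerate}
\item Whatever clause gadget $c_i$ Black lets White into, $c_i$ contains some {\tt true} variable $v_j$, and there is a side path to it.
\item The north end of the side path is an unblockable T-junction. At the exit, the $\dashv$ junction only blocks turns \emph{into} its stem, which is not the turn White makes there. So Black cannot stop White from entering the gadget of $v_j$; this is the observation preceding Lemma~\ref{lem:stage4}.
\item Inside the gadget, White walks north to the box. The vertical wall there guarantees free vertical passage, as in Fig.~\ref{truth-value-assignment3}.
\item Black cannot block the exit with a wall: the box already holds its one possible wall, and the corridors are protected by their perpendicular walls.
\item Railroading is also useless here. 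Along the route, the only blockable obstacles are the interior chambers, and going straight through a chamber requires no $90^\circ$ turn at a cross junction.
\end{enumerate}

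The main obstacle is step~5 of the second direction: showing that nothing along White's route from the variable gadget entrance northward to the box can be railroaded or walled. Black can close the interior chambers (labeled~9), and this might seem to cut White off. The fix I would try is to check that every chamber offers $\left\lceil\frac{v}{2}\right\rceil+2$ parallel corridors. The no-disconnection rule then forbids Black from closing the last one while it is White's only route to victory. It remains to confirm that the emergency exit does not give Black license to close it: the emergency exit itself stays reachable, so that rule alone does not force the last corridor open. I would therefore need the concrete geometry: whether the chamber, with the long and winding road, lies on White's route north of the entrance, and whether any closure is legal. This may require stating more precisely where the chambers sit relative to the variable gadget entrance.
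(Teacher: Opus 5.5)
Your two directions follow the paper's proof. In the forward direction, Black steers White into an all-{\tt false} clause (Lemma~\ref{lem:White-maze-entry}), Lemma~\ref{lem:stage3} confines her to side paths, and Lemma~\ref{lem:stage4} confines her to the chosen gadget apart from its closed north exit. In the converse, a {\tt true} variable in the chosen clause gives an unstoppable side-path entry and a walk north. The gap is step~5 of the converse: you leave it open, and the repair you sketch cannot work. You are right to doubt the no-disconnection rule. Since White always retains other routes to victory (at least the emergency exit), Black may legally wall off every corridor of an interior chamber. So if White's route north had to pass through a chamber, Black could always force her to backtrack. Your remark that ``going straight through a chamber requires no $90^\circ$ turn'' does not help, because the threat there is walls, not railroading.

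The missing idea is in the construction, not the rules. Inside a variable gadget, each chamber (label~9) runs in parallel with a long and winding road (label~10) that joins the same pair of junctions. That road is a corridor, so its perpendicular guard walls make it impossible to wall off. It branches off through unblockable T-junctions, so railroading cannot keep White out of it. A pawn standing in a width-one corridor cannot obstruct White either, since she jumps past it.

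White therefore never needs the chambers. From the variable gadget entrance she follows long and winding roads to the box, passing straight north through each cross junction, and railroading never prevents straight passage. This is how the paper handles it in the proofs of Lemma~\ref{lem:stage4} and Lemma~\ref{lem:5m} (``using the long and winding roads and not the chambers''). The only cost is the extra $3L$ steps per cell, which does not affect whether White escapes. With this observation your step~5 is settled, and your converse reduces to the paper's one-sentence argument.
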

    
\begin{proof}
    Suppose~$A$ is false in the simulation. Then at least one clause~$c_i$ has all false variables---i.e., the north exits are closed in all variable gadgets of variables~$v_j$ to which a side path from~$c_i$ exists. By Lemma~\ref{lem:stage3} and the subsequent discussion, Black can prevent White from leaving~$c_i$ except through one of these side paths. When White uses such a side path to enter the gadget for~$v_j$, Black can prevent White's escape by Lemma~\ref{lem:stage4}, save for the north exit. But since the north exit is closed, White is trapped.

    Conversely, if the simulated~$A$ is true, then some gadget for a variable $v_j \in c_i$ is open, in which case Black cannot prevent White from entering the~$v_j$ gadget through its side path and walking north directly out of the maze.
\end{proof}

The following lemma is needed for the proof of Lemma~\ref{lem:escape-Quor}.

\begin{lemma}\label{lem:5m}
    In games 
    White 
    escapes the maze,
    her pawn
    travels from 
    its
    starting point to the victory corridor's entrance in under 
    $5m$ steps.
        In this case, since White's victory path has length $24m$, White threatens to win before turn $29m$.
    
\end{lemma}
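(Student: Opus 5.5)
The plan is to split White's journey in a game where she escapes the maze into three legs and bound each separately. The first leg runs from the starting location to the point of entry into a clause gadget. The second runs inside the maze from that entry point to the north exit of the variable gadget through which she leaves. The third runs from that exit to the entrance of the white victory corridor, labeled~14. Lemma~\ref{lem:White-maze-entry} already gives the first leg: it takes fewer than $3m$ steps, including all the backtracking forced by Black walling off chamber corridors, since White traverses no square more than three times. So what remains is to show that the second and third legs together cost fewer than $2m$ steps.

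For the second leg, I would argue as follows. By Corollary~\ref{cor:sim-escape} and the proofs of Lemmas~\ref{lem:stage3} and~\ref{lem:stage4}, White escapes in such a game only when the chosen clause $c_i$ contains a true variable $v_j$. Her intended route is then fixed:
\begin{itemize}
\item east along the clause corridor of $c_i$ to the side path for $v_j$;
\item through that side path into the variable gadget;
\item north through the variable gadget to the box and out.
\end{itemize}
This route is essentially a simple path, so each square is used at most once. It is also contained in the maze, which has $m$ squares, so it costs at most $m$ steps.

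The complication is that Black may still force detours. He can close the chambers (labeled~9) inside the variable gadget in White's face, just as in Lemma~\ref{lem:White-maze-entry}, and White may also make excursions that she later retraces. I would handle this with the same "each square at most three times" counting used for the entry phase. The steps on which White is forced to backtrack lie inside chambers and long and winding roads, which are disjoint regions of the maze, so the total cost of this leg stays below $2m - (\text{third leg})$. Alternatively, one can simply bound it by $3$ times the area of the maze portion used, which is well under $m$ times a small constant, provided I note that the maze occupies a strict fraction of $m$ once the clause entrance area is excluded.

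For the third leg, the corridor from the top of the maze's columns to the victory corridor entrance lies in the region west of the black victory corridors and above the maze. That region has width under $m_w$ plus the external chambers, which the construction says totals well under $m$. Black cannot lengthen this path, because the exit above an opened box is guaranteed free (Fig.~\ref{truth-value-assignment3}) and the connecting corridors are protected walled corridors. So this leg costs well under $m$ steps. Summing the three legs gives fewer than $3m + m + m = 5m$ steps. Adding the $24m$ steps of the victory corridor, which cannot be blocked because it is White's only path, gives a threatened win before turn $29m$.

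I expect the main obstacle to be the second leg. I must make sure that Black's delaying tactics inside the maze cannot force White to revisit squares an unbounded number of times. The key points are that chambers are the only places where Black can wall in front of her, and that each chamber can be closed only once. I would first try to argue that White never needs to re-enter a region she has left, so the factor-three bound applies uniformly. Failing that, I would fall back on the counting used in Lemma~\ref{lem:White-maze-entry}.
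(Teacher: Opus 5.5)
Your three-leg decomposition matches the paper's, and your first and third legs are fine. The second leg, however, has a genuine gap. You leave open whether Black can force detours inside the variable gadget, and neither resolution you offer works.

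The missing observation is that White never has to enter the chambers labeled~9 at all. She can bypass each of them by taking the long and winding roads (label~10). These are corridors in the paper's technical sense: perpendicular walls protect them, so no wall can legally be placed inside, and Black cannot obstruct them. Black cannot stop her elsewhere on this route either:
\begin{itemize}
\item she turns into the variable gadget from the stem of the $\dashv$ junction, which railroading never blocks;
\item at cross junctions she only continues straight north, which railroading does not prevent.
\end{itemize}
So her route---east along $c_i$ to the side path of a true variable, through it, then north to the open box---is a simple path inside the maze that is never retraced. It therefore costs fewer than $m$ steps.

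Your fallback of counting each square at most three times does not give this. By definition $m = m_h m_w$ is the number of squares of the maze itself, and the clause entrance chambers lie outside it. So your remark that the maze is a strict fraction of $m$ is false. A factor-three count bounds this leg only by roughly $3m$, not by the less-than-$2m$ budget that legs two and three must share.

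There is also a smaller omission. ``Turn $29m$'' counts White's turns, not her pawn steps, so you must also add the at most $\lceil v/2\rceil$ turns she spends placing walls in Stage~1. This works only because of slack you did not state. Legs two and three together cost under $m+m_w$, since the exit walk is at most $m_w \ll m$. Also $\lceil v/2\rceil < v \ll m_w$. The paper therefore gets $\lceil v/2\rceil + 3m + m + m_w + 24m < 29m$, whereas your bare ``fewer than $5m$ steps plus $24m$'' does not by itself give the stated turn bound.
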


\begin{proof}
By Lemma~\ref{lem:White-maze-entry}, White reaches a clause gadget in the maze for some clause~$c_i$ within~$3m$ steps. From there,  
optimal strategy 
is to walk directly to the side path leading to the entrance of the nearest variable gadget whose northern exit is open (i.e., the nearest true variable in~$c_i$). Then White enters the variable gadget and heads north to the exit, using  
the long and winding roads and not  
the chambers on the way (since Black could  
wall a chamber off during any attempt by White to pass  
through it, forcing White to backtrack). No square is traversed twice during this process, so this portion of the trip is under $m$ steps. Since the horizontal walk from the variable exit to the victory corridor is no greater than $m_w \ll m$ steps, the total journey is below~$3m+m+m_w \ll 5m$ steps.  
So, since White spends at most $\left\lceil\frac{v}{2}\right\rceil < v \ll m_w$ turns in Stage~1, White's victories all occur before turn~$\left\lceil\frac{v}{2}\right\rceil + 3m + m + m_w + 24m < 29m$.
\end{proof}

\begin{lemma}\label{lem:escape-Quor}
    Black wins the game if and only if he can indefinitely prevent White from escaping the maze.
\end{lemma}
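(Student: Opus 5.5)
We prove the two directions separately, using Lemma~\ref{lem:5m} for timing and Lemmas~\ref{lem:stage3} and~\ref{lem:stage4} for the "time to spare" that Black banks while railroading.

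\emph{($\Leftarrow$) Black can trap White indefinitely, so Black wins.}
Suppose Black can keep White inside the maze forever. By Lemmas~\ref{lem:stage3} and~\ref{lem:stage4}, every race to a junction leaves Black at least one spare turn. Black spends each spare turn placing a wall in a chamber labeled~13, closing one of his own excess victory corridors. Since Black holds $15m>b-1$ walls, after finitely many cycles he has closed $b-1$ corridors.

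Black's one remaining corridor is then unblockable: any wall across it would leave him no path to victory, which is illegal. Black now leaves the maze and walks to that corridor and down it, taking at most $m+16m$ steps. White is still confined to the maze at the moment Black leaves.

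The key estimate is that White cannot win this race. Once White is no longer railroaded, her escape routes are either a variable exit or the emergency exit. Either way she needs to reach the entrance of her victory corridor, whose length is $24m$. She is at distance at least $24m$ from the top rank, so she needs at least $24m$ moves. Black needs fewer than $17m$ moves, and her wall placements cannot lengthen his unique path.

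One subtlety is White's reserved wall. It cannot be placed in Black's last corridor, and any wall she places elsewhere only costs her a tempo. Hence Black wins.

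\emph{($\Rightarrow$) White escapes, so Black loses.}
We argue the contrapositive: if White escapes, she wins. By Corollary~\ref{cor:sim-escape}, escape happens exactly when formula~$A$ is true. By Lemma~\ref{lem:5m}, White then threatens to win before turn $29m$.

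We first bound how many corridors Black has managed to close by then.
\begin{itemize}
\item Black gains spare turns only while railroading White.
\item White spends under $5m$ steps in total before escaping, so Black has had at most about $5m$ turns of any kind.
\item Closing all but one corridor needs $b-1=13m-1>5m$ turns.
\end{itemize}
So when White escapes, Black still has several open victory corridors. Each of them is therefore legally blockable.

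Black's only hope is to race. He first has to leave the maze region and then walk $16m$ squares down a corridor. White spends her saved extra wall exactly when Black is one step from the bottom, closing the chamber~13 of the corridor he is in. This wall is legal, because other corridors remain open for Black.

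After that wall, Black must retrace and descend another corridor, costing him at least $32m$ further moves. His arrival falls well beyond turn $29m$, even counting his head start. During the race Black may also place more walls in his own corridors, but each one costs him a move, so the comparison does not change.

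White's corridor is an unblockable corridor, so Black cannot delay her. She therefore reaches the top rank first.

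The main obstacle is the bookkeeping in this direction. We have to check that Black cannot interleave wall-closing with racing so as to reach a single open corridor before White's $29m$ deadline. The plan is to bound the total number of Black turns before White wins by $29m$. Reducing to one open corridor requires $13m-1$ wall turns, and finishing via that last corridor then requires at least $16m$ more moves. Together that is at least $29m-1$ turns, plus the time to exit the maze, which pushes the total past $29m$.

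If Black instead keeps two or more corridors open, White's single wall forces him to spend $32m$ on the failed corridor and the detour. So every option for Black is too slow, which completes the equivalence.
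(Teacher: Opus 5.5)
Your proof is correct and follows the same two-direction argument as the paper. In one direction, spare railroading turns let Black close $b-1$ corridors and then outrun White's $24m$ path. In the other, once White escapes she wins before turn $29m$, because reducing to one unblockable corridor and reaching the bottom costs Black more than $(13m-1)+16m$ turns, and your separate treatment of the case where Black keeps two or more corridors open (White's reserved wall forcing a $32m$ detour) spells out what the paper leaves implicit.
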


\begin{proof}
    By Lemmas~\ref{lem:stage3} and~\ref{lem:stage4}, if Black can indefinitely prevent White from escaping the maze by railroading, then he can do so with time to spare that can be spent closing excess victory corridors. Eventually the penultimate corridor is closed, after which Black leaves the maze. This allows White to escape, but since Black's victory path is a full $24m-16m = 8m$ shorter than White's, Black wins.

    If Black cannot keep White inside the maze, White 
    reaches the top rank before turn~$29m$, by Lemma~\ref{lem:5m}. Black, however, needs in total~$13m-1$ turns to close victory corridors, several turns to approach the remaining victory corridor from his pawn's starting position (more than $m_w$), and 
    then~$16m$ 
    turns to walk down and reach the bottom rank, totaling over $(13m-1)+m_w+16m > 29m$. So White wins.
\end{proof}

Our desired Theorem~\ref{thm:PSPACE-hard} and Corollary~\ref{cor:main} now follow
directly from Corollary~\ref{cor:sim-escape} and Lemma~\ref{lem:escape-Quor}, once we provide the earlier promised proof that placing walls in the~$2\times 2$ boxes is optimal play during Stage~1. We do this in the following.

\begin{lemma}\label{lem:stage1}
    In 
    the first stage, White (Black) has no better moves than to simulate \gpos\ 
    by placing vertical (horizontal) walls in the $2\times 2$ boxes above the variable gadgets.
\end{lemma}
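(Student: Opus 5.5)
The argument is a strategy-stealing/imitation argument run separately for each player. The point is that every Stage~1 alternative either wastes a tempo or a wall piece, or can be neutralized by the opponent at no cost to the simulation. We begin by listing what else a player could do during Stage~1: (i) move the pawn; (ii) place a wall somewhere other than a $2\times 2$ box, in particular in a chamber, a black victory corridor, or a corridor of the maze; (iii) place a wall of the ``wrong'' orientation in a box, i.e.\ White closing a gate or Black opening one. Option (iii) is dominated immediately. A vertical wall in a box is exactly a \texttt{true} assignment and a horizontal wall exactly a \texttt{false} one. So a player who places the wrong orientation has simply made a \gpos\ move that is weakly worse for themselves, by monotonicity of positive CNF: setting a variable to the opponent's value never helps.

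For White, we would argue as follows. By Lemmas~\ref{lem:White-maze-entry}--\ref{lem:escape-Quor}, the outcome is decided by whether every clause contains an open gate at the moment White commits to a clause gadget. White's non-box walls cannot block any chamber, since she has fewer than $\left\lceil\frac{v}{2}\right\rceil+2$ walls. She cannot legally block Black's victory corridors, which number $13m$. Any wall she puts in the maze only lengthens paths; by the slack built into $L$ and $3L$ (cf.\ the footnote in Lemma~\ref{lem:stage3}), it delays Black by far less than the margins used there. Such a wall therefore only costs her a wall piece, which is equivalent to passing a \gpos\ turn, and passing never helps Player~I. Moving her pawn toward the maze early is also harmless to Black. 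Black answers with a horizontal wall in an unchosen box, i.e.\ he gains an extra \gpos\ move. The time bounds of Lemma~\ref{lem:5m} do not depend on when Stage~2 starts, since Stage~1 contributes fewer than $v\ll m$ turns. So deviating can only turn a Player~I win into a loss, never the reverse.

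For Black, the case analysis is symmetric but needs more care, because Black has many spare walls. A non-box wall by Black is at best one of his future corridor-closing walls played early; he gains at most one tempo toward the $b-1$ closures. Meanwhile White, given the tempo, places an extra vertical wall, i.e.\ takes an extra \gpos\ move. The closures matter only if White is trapped forever, and in that case Black has unbounded spare time anyway. If instead White escapes, Lemma~\ref{lem:escape-Quor} shows a gap of order $m$, which one tempo cannot bridge. Black walls inside chambers or the maze cannot create a trap that the railroading argument did not already provide. They also cannot block every white path, because the emergency exit must stay open. Early pawn moves by Black likewise gain nothing, since his railroading in Lemmas~\ref{lem:stage3}--\ref{lem:stage4} begins only after White reaches the maze. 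So Black has no better move than a horizontal wall in a box.

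The main obstacle will be Black's non-box walls, in particular walls that could interfere with side paths or variable gadget entrances. We must show that no single Black wall can delete a side path without violating the path-to-victory rule or costing Black the tempo argument. We would handle this by noting that any wall inside a side path's winding road can be detoured around within the elongator, or is illegal because of the perpendicular ``anti-blocking'' walls that define corridors. If a residual gap remains, we would fall back to thickening the construction with those perpendicular walls so that every corridor outside the chambers, the boxes, and the victory-corridor chambers (label~13) is unblockable.
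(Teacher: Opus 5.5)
Your case analysis is the paper's. A wrong-orientation wall only helps the opponent (monotonicity of positive CNF). A wall elsewhere is a pass. A White pawn move is answered by a Black horizontal wall in a box. Black's ``progress'' moves buy fewer than $v\ll m$ tempi against a margin of order $m$.

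\textbf{The gap: White's early pawn moves.} You call them harmless by citing only the endgame timing of Lemma~\ref{lem:5m}, but that is not where the danger lies. While Black spends his turns filling boxes, his own pawn stands still, so a White pawn that walks every turn builds a head start. If that head start got her into the maze early, she could reach the junctions before Black's stationary pawn. His railroading in Lemmas~\ref{lem:stage3}--\ref{lem:stage4} presupposes that he arrives first. She could then leave through the emergency exit and win the race, because Black has closed no victory corridors yet.

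The paper rules this out with a design parameter you never invoke: the horizontal distance from the chamber access corridor to the clause entrance chambers exceeds $v$. So even if White walks on every turn, Black has put a horizontal wall in every box before she reaches any chamber. White is then still outside the maze when Stage~2 begins, which is the situation Lemmas~\ref{lem:White-maze-entry} and~\ref{lem:stage3} analyze. Without this bound, or an equivalent one, ``Black answers with a horizontal wall in an unchosen box'' is not a sufficient reply.

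\textbf{Smaller points.}
\begin{itemize}
\item White \emph{can} legally wall off Black's victory corridors, as long as one stays open. The correct reason this is useless is that it does Black's own work for him.
\item Your ``main obstacle'' is already settled by the construction. By definition, every corridor carries perpendicular walls outside it, so any wall that would block it is illegal by overlap. Side paths, winding roads and shortcut corridors therefore cannot be walled. The only places where a non-box wall can matter are the chambers (labels~3 and~9) and the label-13 slots. This is exactly what the paper relies on when it says no other move impedes White's escape. You need neither a detour-through-the-elongator argument nor any change to the construction.
\item When bounding Black's tempo gain, count the total over all of Stage~1 (fewer than $v$ moves), not ``one tempo''.
\end{itemize}
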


\begin{proof}
 To prove Lemma~\ref{lem:stage1}, we must 
 show that in Stage~1, no move is better for 
 White (Black) than simulating \gpos\ by placing a vertical (horizontal) wall in a $2\times 2$ truth-value assignment box. We consider White's perspective 
    first. 
    
    Recall that Stage~1 continues just until the outcome of the simulation is decided---i.e., when either a clause has all false variables or every clause has one true variable---which may occur before all variables are selected. During this time period, White has three alternatives to placing a vertical wall in a box: White can (a) place a horizontal wall in a box (which only helps Black), (b) place a wall in another location (which accomplishes nothing), or (c) move her pawn. We argue that while (c) technically accomplishes something, it is nothing relevant, as the time gained by moving the pawn closer to the maze is dwarfed in value by the missed opportunity to place a vertical wall in a~$2\times 2$ box. 
    Since the distance from White's starting location to the clause entrance chambers is greater than~$v$ (per the initial description of the chamber access corridor), White cannot ignore the simulation and reach the maze before Black has closed every variable exit, deciding the game. 
    And if White chooses to participate in the simulation on some turns while ignoring it on others, then for every turn she spends on a pawn move instead of placing a vertical wall, Black can place a horizontal wall instead, reducing this scenario to~(a). 

    From Black's perspective, moves other than placing a horizontal wall in a~$2\times 2$ box are pointless because the moves that make progress---walling off the victory corridors or moving the black pawn toward/through them---are insufficient to prevent White 
    winning if she can escape the maze, as the number of such moves needed exceeds~$29m$ (see Lemma~\ref{lem:escape-Quor}). No moves other than placing walls in~$2\times 2$ boxes impede White's escape at this stage; indeed, no other move accomplishes anything at all, except for closing one of the corridors in a chamber (which can always wait until White enters that corridor, if she ever does, in a future stage).
\end{proof}

This completes the main proof. $\hfill \qed$

\section{Discussion}\label{sec:disc}

We have shown that the decision problem for $(n\times n)$ Quoridor played with two players is PSPACE-complete. Moreover, Corollary~\ref{cor:main} shows this is the case even in positions in which it is guaranteed that one of the players has a forced win. This we proved via reduction from \gpos.

In Sec.~\ref{sec:results}, we claimed that our results obtain not only for Quoridor played with two players but also three- and four-player Quoridor as well. Quoridor with four players is played the same as the version with two, but with the additional two players facing east and west instead of north and south. For this reason, it is typically not recommended to play with exactly three players, as the lack of an opposite-facing opponent for one of the players creates an imbalance.

Justifying that  
$(n\times n)$ Quoridor played with more than two players 
is in PSPACE
is a straightforward adaptation of the proof in this paper. For 
hardness, our reduction can be modified to include the additional players but provide them with zero remaining wall pieces and with pawns located sufficiently remotely to have no chance to win or influence the game. For example, we may place a long horizontal chain of wall pieces just beneath the board's top rank, adding a small gap to create a T-junction with White's victory corridor, and place each new pawn in one of the two top corners. Since each needs~$n-1 = 40m-1$ moves to reach the opposite file and win---far longer than the first two players---their presence has no effect on the game result.

Future research may explore the hardness of games related to Quoridor, including Maze Attack, Pinko Pallino, and Blockade (or \textit{Cul-de-sac}). Maze Attack, the most recently published of these, is mostly identical to Quoridor, but with the crucial difference that pawns are not allowed to jump. That is to say, adjacent pawns prevent each other entirely from moving in that direction. Our present reduction is insufficient to prove hardness for this game, as Black simply wins automatically by outright blocking White from entering the maze until he is ready to win.

Pinko Pallino is a predecessor to Quoridor, released two years earlier by the same designer, Marchesi. A key difference is that pawns in Pinko Pallino can move diagonally. Because of this, the railroading strategy used at cross junctions in the present paper fails, so another reduction is needed.

Blockade, designed in 1975 by Philip Slater, is the oldest variant and was Marchesi's inspiration for Pinko Pallino and Quoridor. Players in this game have two pawns each. A turn consists of first moving a pawn two squares cardinally or one square diagonally, and then placing a wall, if any remain in supply. Jumping is permitted. Unlike other variants, Blockade has two sorts of walls, one of which may only be placed vertically and one only horizontally. Needless to say, this is sufficiently different from Quoridor that an alternative reduction is needed.

\section*{Acknowledgements}

We would like to thank Erik Stei, Kanae Yoshiwatari, Tonan Kamata, Robert Barish, and Daniele Muscillo for their helpful conversations and comments.

%
\bibliographystyle{plain}
 \bibliography{bibliography}
%

\end{document}